\newtheorem{thm}{Theorem}[]
\newtheorem{cor}{Corollary}[]
\newtheorem{lem}{Lemma}[]
\begin{document}

\title{{Link Regime and Power Savings of Decode-Forward Relaying in Fading Channels}}
\author{Lisa Pinals,~\IEEEmembership{Student Member,~IEEE,}
		and~Ahmad Abu Al Haija,~\IEEEmembership{Member,~IEEE}
        and~Mai Vu,~\IEEEmembership{Senior Member,~IEEE}
        \thanks{This work has been supported in part by the Office of Naval Research (ONR, Grant N00014-14-1-0645) and National Science Foundation Graduate Research Fellowship Program (NSF, Grant No. DGE-1325256). 
        
        L. Pinals and M. Vu are with the Department of Electrical and Computer Engineering, Tufts University, Medford, MA 02155 USA (e-mails: lisa.pinals@tufts.edu,   maivu@ece.tufts.edu).

A. Abu Al Haija is with the Department of Electrical and Computer
Engineering, University of Alberta, Edmonton, AB T6G 1H9, Canada (e-mail: abualhai@ualberta.ca).}}

\maketitle
\begin{abstract}
In this paper, we re-examine the relay channel under the decode-forward (DF) strategy. Contrary to the established belief that block Markov coding is always the rate-optimal DF strategy, under certain channel conditions (a link regime), independent signaling between the source and relay achieves the same transmission rate without requiring coherent channel phase information. Further, this independent signaling regime allows the relay to conserve power. As such, we design a composite DF relaying strategy that achieves the same rate as block Markov DF but with less required relay power. The finding is attractive from the link adaptation perspective to adapt relay coding and relay power according to the link state. We examine this link adaptation in fading under both perfect channel state information (CSI) and practical CSI, in which nodes have perfect receive and long-term transmit CSI, and derive the corresponding relay power savings in both cases. We also derive the outage probability of the composite relaying scheme which adapts the signaling to the link regime. Through simulation, we expose a novel trade-off for relay placement showing that the relay conserves the most power when closer to the destination but achieves the most rate gain when closer to the source.
\end{abstract}

\begin{IEEEkeywords}
\boldmath
Relay Channel, Cooperative Communications, Decode-and-Forward, Link-State, Green Communications, Energy-efficiency, Optimization
\end{IEEEkeywords}

\IEEEpeerreviewmaketitle
\section{Introduction}\label{sec:intro}
Relay aided communication has shown to be beneficial to wireless networks by improving throughput and coverage \cite{relayIntro2,d2d}. Given that wireless transmit power is often limited, it is of interest to design a relaying scheme that uses the minimum transmit power but still achieves the maximum possible rate. In this paper we reexamine the basic relay channel, in which a relay aids the transmission between the source and destination, with regard to relay power consumption. We assume a direct link between the source and destination, an appropriate model for wireless communication.

Minimizing relay transmit power allows the relay to operate longer and improves the overall network performance. With less required relay transmit power, the relay consumes less of its own resources to help another node, thus increasing the incentives for an idle node to utilize its own resources to relay the information of other nodes. Relays can also enable reductions of network energy consumption without complicated infrastructure modifications due to the reduced transmission distances\cite{greenmag1}. Further, by transmitting below full power, the relay creates less interference to other nodes in the network, resulting in better network performance.

\begin{figure}[t]
   \begin{center}
		\includegraphics[width=0.45\textwidth]{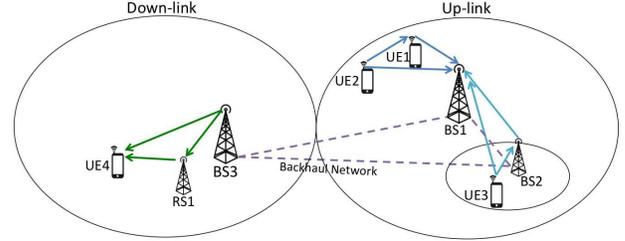}
    \caption{Examples of application of the relay channel in a cellular network.}
    \label{fig:motivation}
    \end{center}
\end{figure}
\subsection{Relay Channel Integration in Larger Networks}
This basic relay channel model could readily be applied as a component in a larger wireless network. In an ad hoc setting, relaying can be applied synonymously among all users \cite{adhoc1,adhoc2}. A third user can act as a relay, such as in emergency message dissemination in vehicular ad hoc networks. In future cellular systems, relaying can leverage device-to-device (D2D) communication to improve transmission rates and spectral efficiency \cite{d2d}.

In cellular networks, the use of relaying among user devices and base stations becomes more feasible due to network densification. For example, consider the network in Fig. \ref{fig:motivation} with two macrocells and a femtocell. This heterogeneous network contains several example applications of the relay channel enabled by the proximity of users. In addition to the regular use of relay stations such as RS$_1$ to facilitate the down-link communication between BS$_3$ and UE$_4$, several other more spontaneous relaying scenarios can be envisioned. An idle user, UE$_1$, acts as a relay for the up-link communication between UE$_2$ and BS$_1$ by utilizing a D2D link as in \cite{hussain1}. A macro user, UE$_3$, can relay through a nearby femto base station, BS$_2$, to improve its up-link throughput to BS$_1$. In these examples, the relayed communications can leverage both the relay link and the direct link to increase the transmission rate beyond what is achievable with the direct link alone.

\subsection{Background and Related Works}
Fundamental relay strategies include amplify-forward (AF) \cite{ref5in2}, compress-forward (CF), and decode-forward (DF) \cite{ray2,ref4in2}. AF is a low complexity strategy in which the noise along with the received signal is amplified and retransmitted. In CF, the received signal is quantized instead of decoded at the relay to alleviate the effects of noise. Here we focus on the DF relaying strategy \cite{ray2,ref4in2} in which the effects of noise are removed completely at the relay by decoding the message, either fully or partially, before re-encoding it to transmit to the destination. For the basic relay channel, the common DF technique is based on block Markov superposition coding. When considering a more complicated scenario involving more nodes, then independent coding at the relay is also useful in reducing inter-node interference \cite{achievable,xie2007network}. However, this idea has not been applied in the basic relay channel due to the belief that block Markov coding is the best DF technique for this channel.

The schemes in \cite{ref5in2, myJDF, DF10, myoutJ, ErkS, ahh4, outC2, DF3, PDFE, myoutC} employ direct transmission and either coherent or independent DF relaying depending on the decoding ability of the relay or the relative order between the source-to-relay and direct links. DF relaying can achieve a full diversity order of $2$ if the transmission switches between direct transmission and DF relaying based on the outage at the relay \cite{ref5in2} or the link order between the source-to-relay and direct links \cite{myJDF, outC2}. To date, there is no outage analysis for a DF scheme that alternates between direct transmission, independent coding, and coherent block Markov coding depending on the link state that achieves the same performance as that of direct transmission and block Markov DF relaying.

Use of independent coding impacts the power allocation at the source and relay differently than block Markov coding. Block Markov coding requires power splitting at the source between old and new messages. When relay transmit power is a design constraint, the energy efficiency of block Markov coding is shown to be greater than that of AF given that the relay is not very close to the destination \cite{rpower1}. In contrast, independent coding allows full source power to be devoted to the current message; further, independent coding is straightforward to implement as source-relay phase coherency is not required. Our results in the two-way relay channel (TWRC) demonstrate that independent coding helps to increase rate and reduce relay transmit power in certain link-state regimes \cite{pinals}. Here we want to analytically understand if independent coding and power savings also apply to the basic relay channel and obtain a closed form optimal power allocation, a result that was not possible for the TWRC. This relay power savings has been recognized in the basic relay channel in half-duplex mode when minimizing the source transmit power under a transmission rate constraint, leading to a power optimal scheme that may not consume full power at the relay \cite[Proposition 7]{Fanny1}. Although this work finds different cases of optimal power allocation in closed form, it did not identify the link conditions for these different cases and for relay power savings to be optimal. Here we analytically identify the conditions under which the relay conserves power but still achieves the maximum rate.

\subsection{Main Results and Contributions}
The main contribution of this work is the analysis of the link adaptation and relay power savings in a composite DF relaying scheme consisting of block Markov coding and independent coding. In order to reach this result, we develop a composite DF scheme that maximizes the rate for a given set of link states and identifies link-state regimes in which a particular transmission technique is optimal. Contrary to the established belief that block Markov coding is optimal whenever the source-to-relay link is stronger than the direct link \cite{hsce612}, this composite scheme achieves the same rate in all cases but also results in power savings at the relay.

Next, we investigate the implications of these link-state regimes in terms of link adaptation in fading. Although the composite DF scheme is derived for fixed links, we are interested in characterizing performance in a fading environment. We compute the outage probability in closed form and show that the scheme achieves a full diversity order of 2 by considering the asymptotic outage behavior at high SNR. 

Further, we consider which components of CSI are necessary at the source and relay to adapt to instantaneous fading. In the literature, different forms of CSI have been studied, including perfect CSI, long-term CSI, or partial CSI as some estimate in between \cite{CSIT1,CSIT2,reviewer2}. However, there has not been much study on different types of CSI in the full-duplex relay channel. As the relay receives on one link and transmits on another simultaneously, there are more CSI options such as long-term CSI on the transmit link with perfect CSI on the receive link. In this paper we examine the performance of this more viable, alternative form of CSI which we term \textit{practical CSI}. Practical CSI requires substantially less overhead than perfect CSI and still achieves significant gain over just long-term CSI or direct transmission.

By utilizing independent coding when the source-to-relay link is just marginally stronger than the direct link instead of block Markov coding, the relay does not need to use full power to achieve the highest possible rate. It is demonstrated analytically that the independent coding link regime has a significant probability of occurrence for most node-distance configurations in fading. The expected relay power savings with both perfect CSI and practical CSI assumptions are derived in closed form. Further, results show that even with only long-term CSI, the relay saves a significant portion of power with only marginal effect on the outage performance. Through simulation, we expose a novel trade-off between consumed relay power and rate gain for relay placement. Specifically, the relay conserves the most power when closer to the destination but achieves the most rate gain when closer to the source.

The rest of this paper is organized as follows: Section II describes the channel model and transmission scheme; Section III presents the link-state based optimized rate; Section IV examines link adaptation in fading; Section V investigates the outage probability for this composite scheme; Section VI examines the expected relay power savings under different CSI assumptions; Section VII presents numerical results; and Section VIII concludes the paper.


\begin{figure}[t]
   \begin{center}
    \includegraphics[width=0.25\textwidth]{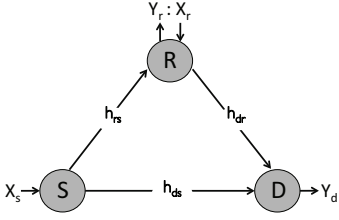} 
    \caption{Relay channel model.}\label{fig:Gausmod}  
    \end{center}
\end{figure}  

\section{Channel Model and Transmission Scheme}
\subsection{Channel Model}\label{sec:system_model}
In this paper, we consider a full-duplex one-way relay channel. Full duplex radio has emerged as a feasible wireless technique to improve spectral efficiency with significant new research demonstrating its feasibility \cite{fd2,fdmain}. Previously, the effects of self-interference could not be canceled at the hardware level and had to be dealt with at the signal processing level \cite{selfInterference1,selfInterference2}. However, researchers recently demonstrated a novel self-interference cancellation circuit and algorithm that provides the required 110dB of cancellation to reduce all self-interference to the noise floor in a Wi-Fi system, \textit{independent of the transmit power} \cite{fdmain}. To the best of our knowledge, no good model currently exists for this full-duplex residual self-interference. In the same way that MMSE error is independent from the estimated signal, we assume that the residual interference is independent of the transmit signal and thus can be modeled as another source of noise at the receive antennas. To incorporate the residual self-interference, it is equivalent to increase the noise power or decrease the transmit power. Further, results from a full-duplex scheme provide the basis for adaptation to half-duplex by later incorporating extra parameters of time or frequency slots required for half-duplex implementation.

Fig. \ref{fig:Gausmod} illustrates the one-way full-duplex relay channel model in which the source communicates with the destination with the help of a relay.
The discrete-time channel model is given as follows:
\begin{align}\label{recsig}
Y_{r}&=h_{rs}X_{s}+Z_{r}\nonumber\\
Y_{d}&=h_{ds}X_{s}+h_{dr}X_{r}+Z_{d},
\end{align}
\noindent where $Z_{r},Z_{d}\sim$ $\cal{CN}$(0,1), are independent complex Gaussian noises with zero mean and unit variance. $Y_{r}$ and $Y_{d}$ are the signals received by the relay and destination respectively and $X_{s}$ and $X_{r}$ are the signals transmitted from the source and relay respectively. The source (relay) input power is constrained to be at most $P_s$ ($P_r$).


The link gain coefficients are assumed to be affected by complex value Rayleigh fading as follows:
 \begin{align}\label{fadingch}
  h_k&=\tilde{h}_k/(d_k^{\gamma_k/2})=g_ke^{j\theta_k},\nonumber\\
   g_k&=|\tilde{h}_k|/d_k^{\gamma_k/2}, \quad k\in \{rs,ds,dr\}
 \end{align}
where $\tilde{h}_k$ is the small scale fading component distributed according to ${\cal CN}(0,1)$. The large scale fading component follows a pathloss model in which $d_k$ is the distance between two nodes and $\gamma_k$ is the attenuation factor. Let $g_k$ and $\theta_k$ be the amplitude and the phase of a link coefficient respectively, then $g_k$ follows a Rayleigh distribution while $\theta_k$ has uniform distribution between $[0,2\pi]$. We assume the distances among nodes are sufficiently large compared to the wavelength such that all links are independent.

The channel gain coefficients are typically complex value. However we assume the phases of these channel coefficients are known at the respective transmitters so that coherent transmission is possible, and the full channel coefficients are known at the corresponding receivers. The phase knowledge that allows coherent transmission is a standard assumption in coherent relaying literature \cite{csi,ray2,ahh4}. As such, the achievable rate depends only on the $amplitude$ of the link coefficients, denoted by $g_{*}$. Next, we describe the transmission scheme employed with this channel model.

We consider no interference in this model as we focus on the fundamental performance of composite DF relaying. When deployed in a larger setting such as in an ad hoc or cellular network, the received signals at the relay and destination will be perturbed by interference in addition to noise. While interference reduces the achievable rate and outage performance, we expect that composite DF relaying will keep outperforming the existing DF schemes in the presence of interference. The outage analyses in Section \ref{sec:OUTA} can be generalized to include interference in the noise terms by specifying new distributions of these terms that reflect both thermal noise and interference at the relay and destination. This topic is outside the scope of the current paper. The analysis in this paper, however, should provide the basis for further analysis with interference in future works.

\subsection{Transmission Scheme}
We propose a composite scheme that contains both coherent block Markov coding and independent network coding. These two distinct signal structures have different impacts on the transmit power and achievable rate. Here we introduce the composite scheme that contains both components and will analyze the necessity of each technique.

\subsubsection{Transmit signal design}
In each transmission block $i$, the source encodes new information $(m_i)$ by superposing it onto the old information $(m_{i-1})$ as in block Markov coding. The source transmits the signal that conveys this superposed codeword for $m_i$ and $m_{i-1}$. Next, the relay decodes $m_i$ and then performs both block Markov coding and independent coding. Specifically, the relay generates a new codeword for the decoded message $m_i$ independently from the source codeword and superposes this onto the source codeword for $m_i$. It transmits the signal of this superposed codeword (for $m_i$) in block $i+1$. The destination employs backward decoding: in block $i$, it decodes $m_{i-1}$ given that it knows $m_i$ from the decoding in block $i+1$.

Using Gaussian signaling, the source and relay construct their respective transmit signals $X_s$ and $X_r$ in block $i$ as follows:
 \noindent
\begin{align}\label{sigtr}
X_{s,i}&=\sqrt{\alpha_s}W_{s}(m_{i-1})+\sqrt{\beta_s}U_{s}(m_i),\\
X_{r,i}&=\sqrt{k_s\alpha_s}W_{s}(m_{i-1})+\sqrt{\beta_r}U_r(m_{i-1}),\nonumber
\end{align}
\noindent where $W_{s},U_{s}$ and $U_r$ are i.i.d Gaussian signals ${\cal N}(0,1)$. Note the superposition signal structure in which $U_{s}$ and $U_r$ are superposed on $W_s$ at the source and relay respectively. Here, $\beta_s$ and $\beta_r$ are the transmission powers allocated for signals $U_{s}$ and $U_{r}$, respectively; $\alpha_s$ and $k_s\alpha_s$ are the transmission powers allocated for signal $W_s$ by the source and relay, respectively. $k_s$ is a scaling factor that relates the power allocated to transmit the same message at the source and relay in the block Markov signal structure. These power allocation parameters satisfy the following power constraints:
\noindent
\begin{align}\label{powcsch2}
\alpha_s+\beta_s&\leq P_s,\;\;k_s\alpha_s+\beta_r\leq P_r.
\end{align}
where $P_s$ and $P_r$ are the maximum transmit powers of the source and relay respectively.

When $\alpha_s \neq 0$, the source sends not only the new messages in each block, but also repeats the message of the previous block. This retransmission due to block Markov coding creates a coherency between the signal transmitted from the source and the relay that ultimately results in a beamforming gain but requires power splitting at the source. In addition to block Markov coding, the relay also creates a new signal $U_r$ that independently encodes the message. This independent coding allows the source to devote full power to the new message of that block. Thus the two techniques have different implications on the power allocation at the source which ultimately affects the achievable rate depending on the link state.

\subsubsection{Decoding}
At the relay, decoding is simple and is similar to the single user case. The received signal in each block at the relay is
\noindent
\begin{align}\label{Yr}
Y_r\!&=g_{rs}(\sqrt{\alpha_s}W_s+\sqrt{\beta_s}U_s)+Z_r
\end{align}
\noindent
In block $i$, the relay already knows signal $W_s$ (which carries $m_{i-1}$), and is interested in decoding $U_s$ (which carries $m_{i}$), which it can perform after extracting known $W_s$.

Backward decoding is performed at the destination starting at the last block, in which the destination decodes just the previous block's information. The received signal in each block at the destination is
\noindent
\begin{align}\label{Y2}
Y_d\! &= \!g_{ds}(\sqrt{\alpha_s}W_s\!+\!\sqrt{\beta_s}U_s)\!+\!g_{dr}(\sqrt{k_s\alpha_s}W_s\!+\!\sqrt{\beta_r}U_r)\!+\!Z_d.
\end{align}
\noindent
Assuming that the destination has correctly decoded $m_{i}$, then in block $i$, the destination knows $U_{s}$ and can decode $m_{i-1}$ by joint decoding \cite[p. 391-393]{hsce612}. 

In this section, we use backward decoding at the destination to analyze the composite scheme but there are other alternative decoding techniques including forward sliding window decoding and sequential decoding. Both of these alternative destination decoding techniques achieve the same transmission rate as backward decoding for the basic relay channel considered, but have different implications on outage performance as well as decoding delay. Backward decoding makes reliability simplest to analyze as outage events are contained in a single transmission block, but is not as practical to implement because decoding can only begin after the last block is received, resulting in a large delay. Sliding window decoding reduces the delay to just one block by using the received signals in two consecutive blocks to decode, but complicates outage analysis if the channel changes over these two blocks.

Sequential decoding at the destination provides the lowest complexity option but requires some changes in the operation at the relay. Specifically, the relay performs random binning for the decoded message, $m_i$, to generate a codeword for its bin index and superposes this codeword onto the source codeword for $m_i$. Each bin represents a group of messages, allowing the code rate to be reduced for the bin index information encoded by signal $U_r$, which can then be decoded sequentially at the destination to identify the bin index before decoding the original source message. Similar to sliding window decoding, outage events at the destination in sequential decoding occur across two consecutive blocks, hence complicating the outage analysis. For these reasons, we choose to focus on backward decoding as a method for achieving the same transmission rate in the basic relay channel, while providing a lower bound on outage performance.

\subsubsection{Achievable Rate}
The above scheme results in the following achievable rate:
\begin{thm}\label{nathr1df}
Using the composite DF transmission scheme, the following rate is achievable for the Gaussian relay channel.
\begin{align}\label{regionDF}
R&\leq C\left(g_{rs}^2\beta_s\right)=J_1,\\
R&\leq C\left(\!g_{ds}^2(\alpha_s\!+\!\beta_s)\!+\!g_{dr}^2\left(\alpha_sk_s\!+\!\beta_r\right)\!+\!2g_{ds}g_{dr}\sqrt{k_s}\alpha_s\right)\!=\!J_2,\nonumber
\end{align}
\noindent with $C(x)\!=\!log(x\!+\!1),\ g_{*}$ as amplitudes of link coefficients, and power allocation factors satisfying (\ref{powcsch2}).
\end{thm}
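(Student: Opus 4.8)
The plan is to recognize (\ref{regionDF}) as the decode-forward achievable rate specialized to the Gaussian signal structure in (\ref{sigtr}), and to obtain the two bounds as the relay and destination decoding constraints of a block Markov argument with backward decoding. Since the codebook and encoding are already fixed by the transmission scheme, the work reduces to a relay decoding step, a destination decoding step, and combining them into a minimum. Throughout I would use that $W_s, U_s, U_r$ are independent, zero mean and unit variance, and that transmitter phase knowledge aligns the repeated codeword $W_s$ sent by both source and relay at the destination, so that only the amplitudes $g_*$ enter the rate.

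For the relay bound, I would note that in block $i$ the relay already knows $m_{i-1}$, hence the codeword $W_s(m_{i-1})$, and from (\ref{Yr}) can cancel it to leave $g_{rs}\sqrt{\beta_s}U_s + Z_r$. Decoding $m_i$ is then a point-to-point Gaussian problem, so joint typicality decoding succeeds whenever $R \le I(U_s;Y_r\mid W_s) = C(g_{rs}^2\beta_s)$, which is $J_1$.

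For the destination bound I would invoke backward decoding: having already recovered $m_i$ from block $i+1$, the destination looks for the unique $\hat m_{i-1}$ whose source and relay codewords are jointly typical with $Y_d$ in (\ref{Y2}). The governing error exponent is the unconditioned joint mutual information $I(X_s,X_r;Y_d)$, because changing the hypothesis $\hat m_{i-1}$ resamples the entire source-relay codeword pair rather than a single component. For the Gaussian channel this equals $C\big(\mathrm{Var}(g_{ds}X_s+g_{dr}X_r)\big)$; expanding the variance, the common $W_s$ carried coherently by the source (amplitude $g_{ds}\sqrt{\alpha_s}$) and the relay (amplitude $g_{dr}\sqrt{k_s\alpha_s}$) combines into $\alpha_s(g_{ds}+g_{dr}\sqrt{k_s})^2$, which supplies the beamforming cross term $2g_{ds}g_{dr}\sqrt{k_s}\alpha_s$, while the independent signals $U_s$ and $U_r$ add their powers $g_{ds}^2\beta_s$ and $g_{dr}^2\beta_r$; collecting terms gives exactly $J_2$.

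Finally, since reliable transmission requires correct decoding at both the relay and the destination, the achievable rate is the minimum of the two constraints, reproducing the pair of inequalities in (\ref{regionDF}); taking the number of blocks large drives the boundary rate loss to zero. I expect the crux to be the destination step, specifically justifying that backward decoding yields the unconditioned mutual information $I(X_s,X_r;Y_d)$ so that the new-message power $g_{ds}^2\beta_s$ appears in $J_2$, together with the careful separation of the coherently combined $W_s$ contribution from the incoherent $U_s$ and $U_r$ contributions in the variance.
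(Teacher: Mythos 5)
Your proposal is correct and follows essentially the same route as the paper: the paper's proof simply cites the decode--forward bound $R<\min\{I(X_s;Y_r\mid X_r),\,I(X_s,X_r;Y_d)\}$ from \cite[Theorem 16.2]{hsce612} and evaluates it for the Gaussian signal structure in \eqref{sigtr}, which is exactly the computation you perform, arriving at $J_1$ and $J_2$. The only difference is that you additionally spell out the achievability argument the paper delegates to the reference --- in particular the (correct) justification that, under superposition coding, backward decoding yields the unconditioned $I(X_s,X_r;Y_d)$ and hence the $g_{ds}^2\beta_s$ term in $J_2$.
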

\begin{proof}
Directly obtained from evaluating the mutual information rate expressions in \cite[p. 390-393, Theorem 16.2]{hsce612} with the transmit signal structures as in \eqref{sigtr}. $J_1$ results from decoding at the relay using the received signal in (\ref{Yr}) and $J_2$ results from decoding at the destination using the received signal in (\ref{Y2}).
\end{proof}


\section{Link-State Based Transmission Rate Optimization}\label{sec:analysis}
The composite scheme described in the previous section includes both coherent block Markov coding and independent coding at the relay. These two distinct signal structures have different impacts on the transmit power allocation and achievable rate. Here we analyze this general scheme to determine which parts of the signal are necessary for a given channel configuration. The goal of this section is to solve for the optimal power allocation for a given link-state. In doing so, we determine link-state regimes that describe which transmission technique is necessary for a given channel configuration. We also discuss the implications of the established link-state regimes and compare these new link-state regimes with the classical link-state regimes based solely on block Markov coding. Further, we numerically contrast the performance of this composite DF relaying scheme to other fundamental relaying strategies including compress-forward and amplify-forward relaying.

\subsection{Problem Formulation and Approach}\label{sec:approach}
We first formulate the source's rate optimization problem that integrates the power allocation factors for both the block Markov coding and independent coding techniques subject to the rate and power constraints as
\noindent
\begin{align}\label{optimization}
\max \ & R\nonumber\\
\text{subject to}\ & R\leq \min{[J_1,J_2]},\ \alpha_s\!+\!\beta_s\leq P_s,\ k_s\alpha_s\!+\!\beta_r\leq P_r,\nonumber\\
& \alpha_s,k_s,\beta_s,\beta_r\geq 0.
\end{align}

We apply the approach in \cite{myjournal} by first analyzing the optimization problem from the dual variable space, then examining the primal rate constraints. Specifically, we question if the optimal Lagrangian dual variables are zero or strictly positive and identify valid combination of these optimal dual variables. Analysis of these valid dual variable combinations leads to the establishment link-state regimes in the composite DF relaying scheme.

Optimization problem (\ref{optimization}) is convex and the optimal power allocation parameters can be fully deduced from the KKT conditions for a given set of link states. The Lagrangian is written as:
\begin{align}\label{LagDT}
\mathcal{L}\! \triangleq\! &-\! R\! -\!\lambda_1(J_1\!-\!R)\!-\! \lambda_2(J_2\!-\!R)\!-\!\lambda_3(P_s\!-\!\alpha_s\!-\!\beta_s)\nonumber\\
&-\!\lambda_4(P_r\!-\!k_s\alpha_s\!-\!\beta_r)\!-\! \lambda_{5}\alpha_s\!-\!\lambda_{6}\beta_r,
\end{align}
\noindent where $\lambda_i, i\in\{1,2,...,6\}$ are the dual variables associated with the corresponding rate and power constraints. Note that all dual variables are non-negative; that is $\lambda_i \geq 0 \ \forall i$.

From (\ref{regionDF}) the associated dual variables with each rate constraint ($\lambda_1, \lambda_2$) can be either strictly positive if the corresponding rate constraint is tight, or equal to zero if the constraint is loose. As such, there are four combinations of optimal dual variables. However, we can immediately eliminate $\lambda_1\!=\!\lambda_2\!=\!0$ because at least one rate constraint must be tight. As such, there are three potential combinations of rate constraints: 1) $\lambda_1>0$, $\lambda_2=0$; 2) $\lambda_1=0$, $\lambda_2>0$; and 3) $\lambda_1>0$, $\lambda_2>0$. However, case 2 implies that $R=J_2<J_1$ for all values of $\alpha_s$. When $\alpha_s=P_s$ though, $J_1=0$, which implies that $J_2<0$, which is impossible. Therefore, case 2 is invalid. In Appendix A, we analyze cases 1 and 3. Next we summarize the results obtained from this analysis.

\subsection{Link-State Regimes and Optimal Power Allocation}
First we introduce an important lemma concerning the power usage of the source and relay:

\begin{lem}\label{lemmaUsersFullPwr}
Given any set of link states, the source will always use full power with composite DF relaying. The relay will use full power if the source performs block Markov coding. If the source does not perform block Markov coding, then the relay need not use full power.
\end{lem}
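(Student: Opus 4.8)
The plan is to prove the three assertions by monotonicity and perturbation arguments on the reduced problem obtained after substituting the active power constraints, exploiting throughout that $J_1$ depends only on the source allocation while $J_2$ is the only rate term the relay can influence.

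First I would show the source always uses full power. Suppose for contradiction that $\alpha_s+\beta_s<P_s$ at an optimum. Reallocating the slack to $\beta_s$ strictly increases the argument of $C$ in both $J_1=C(g_{rs}^2\beta_s)$ and $J_2$ (the coefficients $g_{rs}^2$ and $g_{ds}^2$ are positive) and leaves the relay constraint $k_s\alpha_s+\beta_r\le P_r$ untouched, so $\min[J_1,J_2]$ strictly increases, a contradiction. I therefore set $\beta_s=P_s-\alpha_s$ from here on, giving $J_1=C\!\left(g_{rs}^2(P_s-\alpha_s)\right)$, strictly decreasing in $\alpha_s$, and $J_2=C\!\left(g_{ds}^2 P_s+g_{dr}^2(\alpha_s k_s+\beta_r)+2g_{ds}g_{dr}\sqrt{k_s}\,\alpha_s\right)$, strictly increasing in each of $\alpha_s$, $k_s$, and $\beta_r$.

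The second ingredient, taken from the dual-variable case analysis already carried out (only cases 1 and 3 survive, case 2 being invalid), is that every optimum has $\lambda_1>0$ and hence $R=J_1\le J_2$; that is, the source-to-relay constraint is always tight. With this, the relay claim for block Markov coding follows by perturbation. Fix an optimum with $\alpha_s>0$. If $J_2>J_1$, then lowering $\alpha_s$ (and raising $\beta_s$ to keep the source at full power) raises $J_1=R$ while, for a small enough step, preserving $J_2>J_1$, contradicting optimality; hence $J_2=J_1$. Suppose now the relay has slack, $k_s\alpha_s+\beta_r<P_r$. Pouring the slack into $\beta_r$ (or $k_s$) strictly raises $J_2$ above $J_1$, and then the same reduction of $\alpha_s$ raises $R=J_1$, again a contradiction. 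Therefore $\alpha_s>0$ forces $k_s\alpha_s+\beta_r=P_r$. I expect this two-step perturbation — converting relay-power slack into $J_2$-headroom and then trading it for $J_1$ by reducing $\alpha_s$ — to be the main obstacle, since it is the only place where the coupling of $\alpha_s$, $k_s$, and $\beta_r$ inside $J_2$ must be handled carefully and the feasibility of the simultaneous adjustment verified.

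Finally, for the independent-coding case I would exploit that when $\alpha_s=0$ the relay disappears from $J_1$: here $J_1=C(g_{rs}^2 P_s)$ is fixed and caps $R$, while $J_2=C\!\left(g_{ds}^2 P_s+g_{dr}^2\beta_r\right)$ depends on the relay only through $\beta_r$. Since $R=J_1$ at the optimum, the relay needs only enough power to meet $J_2\ge J_1$, i.e. $\beta_r\ge(g_{rs}^2-g_{ds}^2)P_s/g_{dr}^2$; any excess is wasted because $R$ is already capped at $J_1$. Whenever this threshold lies below $P_r$ — precisely the regime in which $\alpha_s=0$ is optimal — the relay attains the optimal rate with strictly less than full power, which is the desired conclusion.
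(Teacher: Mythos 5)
Your proof is correct and follows essentially the same perturbation/monotonicity strategy as the paper's: pour source slack into the transmit signal to show $\alpha_s+\beta_s=P_s$, then show any relay slack under $\alpha_s>0$ can be converted into a strict rate gain, and read off the minimal $\beta_r$ when $\alpha_s=0$. The only tactical difference is in the middle step --- the paper uses a single simultaneous perturbation (decrease $\alpha_s$, increase $k_s$ so that both $J_1$ and $J_2$ rise), whereas you first establish $J_1\le J_2$ via the dual-variable case elimination and then use a two-step move ($\beta_r$ up, then $\alpha_s$ down) --- but both are valid and the import is the same.
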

\begin{proof}\label{proofUsersFullPwr}
See Appendix A.
\end{proof}
The proof to Lemma \ref{lemmaUsersFullPwr} follows from the idea that when the source performs block Markov coding, it is always beneficial to increase the power of the block Markov component at both the source and relay. Next, Appendix A contains further evaluation of the two valid cases above. This analysis determines the optimal transmission technique based on the link-state and the optimal power allocation within each link-state regime.

\begin{figure}[t]
\begin{center}
		\includegraphics[width=0.38\textwidth]{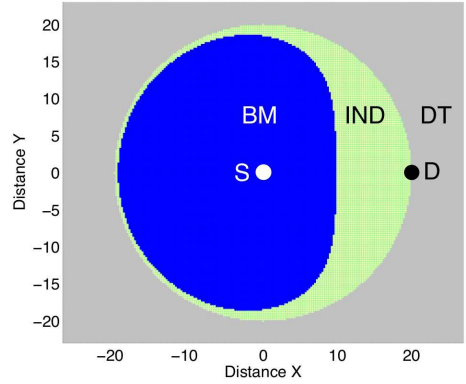}
    \caption{Optimal transmission as a function of distance in meters ($\gamma=3.6$). 'BM' indicates block Markov coding is optimal; 'IND' indicates that independent coding is optimal; 'DT' indicates that direct transmission is optimal.}\label{fig:simulation}
    \end{center}
\end{figure}

We now describe the analytical link-state regimes and the associated optimized DF transmission technique for each regime. For a particular set of channel gains, the associated link-state regime indicates which technique results in the best achievable rate.
\begin{thm}\label{tablethm}
In the Gaussian relay channel the following coding techniques are optimal in the defined link-state regimes:
\begin{align}\label{gr1}
&\mathcal{R}0: g_{rs}^2 \in[0,g_{ds}^2], &&\text{Direct  Transmission},\nonumber\\
&\mathcal{R}1: g_{rs}^2 \in(g_{ds}^2, g_{ds}^2+\frac{P_r}{P_s}g_{dr}^2] &&\text{Independent Coding},\nonumber\\
&\mathcal{R}2: g_{rs}^2 \in(g_{ds}^2+\frac{P_r}{P_s} g_{dr}^2,\infty), &&\text{Block Markov Coding}
\end{align}
\end{thm}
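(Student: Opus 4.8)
The plan is to translate the KKT/dual-variable case analysis begun just before the theorem (and completed in Appendix~A) into statements about which power-allocation parameters are active, and then to read off the regime boundaries from the resulting feasibility conditions. The backbone is the observation that every achievable rate is capped by the relay-decoding cut, $R \le J_1 = C(g_{rs}^2\beta_s)\le C(g_{rs}^2 P_s)$ since $\beta_s\le P_s$, together with the direct-transmission benchmark $C(g_{ds}^2 P_s)$ obtained by devoting full source power to the new message on the direct link. I would combine this cap with Lemma~\ref{lemmaUsersFullPwr} (source always at full power; relay at full power iff block Markov is used) to pin down the active constraints in each of the two valid dual-variable cases.

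First I would treat Case~1, $\lambda_1>0,\ \lambda_2=0$, where only the relay constraint is tight so $R=J_1\le J_2$. Because $\lambda_2=0$, any source power placed on the block Markov component $W_s$ merely shrinks $\beta_s$ and hence $J_1$, so the optimum forces $\alpha_s=0$, $\beta_s=P_s$, giving $R=C(g_{rs}^2 P_s)$. I would then check when this DF rate beats direct transmission: $C(g_{rs}^2 P_s)>C(g_{ds}^2 P_s)$ iff $g_{rs}^2>g_{ds}^2$, which fixes the $\mathcal{R}0/\mathcal{R}1$ boundary and shows direct transmission dominates on $[0,g_{ds}^2]$. For $g_{rs}^2>g_{ds}^2$, feasibility of $R=J_1$ requires $J_2\ge J_1$ with $\alpha_s=0$, i.e. $g_{ds}^2 P_s+g_{dr}^2\beta_r \ge g_{rs}^2 P_s$, so the relay must supply $\beta_r \ge (g_{rs}^2-g_{ds}^2)P_s/g_{dr}^2$ through the independent stream $U_r$. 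Imposing $\beta_r\le P_r$ yields exactly $g_{rs}^2\le g_{ds}^2+(P_r/P_s)g_{dr}^2$, the upper edge of $\mathcal{R}1$, and identifies independent coding (with this $\beta_r$ and no beamforming) as optimal there.

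Next I would treat Case~3, $\lambda_1,\lambda_2>0$, where both rate constraints are tight and, by Lemma~\ref{lemmaUsersFullPwr}, the relay runs at full power $k_s\alpha_s+\beta_r=P_r$. Here I would show block Markov is forced precisely when independent coding can no longer equalize the two cuts: at $\alpha_s=0,\ \beta_r=P_r$ the combined cut saturates at $J_2=C(g_{ds}^2 P_s+g_{dr}^2 P_r)$, which drops below $J_1=C(g_{rs}^2 P_s)$ exactly when $g_{rs}^2>g_{ds}^2+(P_r/P_s)g_{dr}^2$, i.e. on $\mathcal{R}2$. Introducing $\alpha_s>0$ lowers $J_1$ but raises $J_2$ through the coherent beamforming term $2g_{ds}g_{dr}\sqrt{k_s}\,\alpha_s$, so solving the stationarity conditions of \eqref{LagDT} for $(\alpha_s,\beta_s,\beta_r,k_s)$ should yield an interior optimum with $J_1=J_2$ and strictly positive block Markov power, proving block Markov optimal on $\mathcal{R}2$. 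Conversely, on $\mathcal{R}1$ any $\alpha_s>0$ strictly decreases $R=J_1\le C(g_{rs}^2\beta_s)$ below $C(g_{rs}^2 P_s)$, which independent coding already attains, so block Markov cannot help there; this makes the partition sharp at both boundaries.

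The main obstacle I anticipate is the Case~3 stationarity analysis: because the beamforming gain enters as $\sqrt{k_s}\,\alpha_s$ while the relay budget couples $k_s\alpha_s$ with $\beta_r$, the joint optimization over $k_s$, $\alpha_s$, and $\beta_r$ (rather than a single variable) is where the algebra concentrates, and one must show the resulting point is feasible and varies continuously across $g_{rs}^2=g_{ds}^2+(P_r/P_s)g_{dr}^2$, matching the independent-coding solution at the boundary. Verifying that this interior solution dominates every degenerate/boundary allocation — with the convexity of \eqref{optimization} ensuring the KKT point is global — is the crux; the remaining regime identifications then follow from the elementary monotone comparisons above.
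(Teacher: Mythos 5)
Your proposal is correct and follows essentially the same route as the paper's Appendix A: a KKT case analysis on the dual variables $(\lambda_1,\lambda_2)$ combined with Lemma \ref{lemmaUsersFullPwr}, with the regime boundaries read off from the $\max_{\alpha_s} J_1$ versus $\min_{\alpha_s} J_2$ comparison (your explicit comparison of $C(g_{rs}^2P_s)$ against the direct-transmission rate to fix the $\mathcal{R}0$/$\mathcal{R}1$ boundary, and your monotonicity argument that $\lambda_2=0$ forces $\beta_s=P_s$, are slightly more elementary than the paper's stationarity-condition derivations but reach the same conclusions). The one step you defer --- showing $\beta_r=0$ in Case 3 --- is settled in the paper not by solving the coupled stationarity system you anticipate, but by combining $\nabla_{k_s}\mathcal{L}=0$ with $\nabla_{\beta_r}\mathcal{L}$ to conclude $\lambda_6>0$ and hence $\beta_r=0$ by complementary slackness, after which the interior optimum reduces to a single quadratic in $\sqrt{\alpha_s}$.
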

\begin{proof}
See Appendix A
\end{proof}

For the relay channel, we can find the optimal power allocation in closed form as shown in the following theorem.

\begin{thm}\label{thmoptimalpowerallocation}
The optimal power allocation within each link-state regime is as follows. Note that $\beta_s=P_s-\alpha_s$ since the source uses full power (Lemma \ref{lemmaUsersFullPwr}).
\begin{subequations}\label{optPwrAllocation}
\begin{align}\label{optPwrAllocationA}
&\mathcal{R}0:\ \alpha_s=k_s=\beta_r=0\\ \label{optPwrAllocationB}
&\mathcal{R}1:\ \alpha_s=k_s=0,\ \ \beta_r= \frac{g_{rs}^2-g_{ds}^2}{g_{dr}^2}P_s\\ \label{optPwrAllocationC}
&\mathcal{R}2:\ \beta_r=0,\ \  \alpha_s=\xi ,\ \  k_s=\frac{P_r}{\alpha_s}\\
&\text{where }\nonumber\\
&\xi\!\! =\!\! \left[\!\frac{-g_{ds}g_{dr}\sqrt{P_r}\!+\!\sqrt{g_{ds}^2g_{dr}^2P_r\!-\!g_{rs}^2\left(g_{dr}^2P_r\!+\!g_{ds}^2P_s\!-\!g_{rs}^2P_s\right)}}{g_{rs}^2}\!\right]^2\nonumber
\end{align}
\end{subequations}
\end{thm}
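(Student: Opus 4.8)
## Proof Proposal

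The plan is to derive the optimal power allocation by specializing the KKT analysis (already invoked for Theorem~\ref{tablethm}) to each of the three link-state regimes, using the structural facts supplied by Lemma~\ref{lemmaUsersFullPwr}. The key observation is that Theorem~\ref{tablethm} already tells us \emph{which} coding technique is active in each regime, and this immediately pins down the structure of the allocation; the remaining work is to solve for the one or two free parameters that maximize $R=\min[J_1,J_2]$ subject to the full-power constraints.

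First, for regime $\mathcal{R}0$, direct transmission is optimal, so by definition no block Markov component and no independent relay signal are used. I would argue that $\alpha_s=0$ (no repetition of old message), which by Lemma~\ref{lemmaUsersFullPwr} forces $\beta_r$ to be a free nonnegative quantity; however, since the relay link is too weak ($g_{rs}^2\le g_{ds}^2$) to improve the bottleneck, any relay power is wasted, giving $\beta_r=0$, and $k_s=0$ trivially since $\alpha_s=0$. This establishes \eqref{optPwrAllocationA}.

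Next, for regime $\mathcal{R}1$ (independent coding), Lemma~\ref{lemmaUsersFullPwr} tells us that since the source does \emph{not} perform block Markov coding, we have $\alpha_s=0$ (hence $k_s=0$), and the relay need not use full power. With $\alpha_s=0$, the source devotes full power to the new message, $\beta_s=P_s$, so $J_1=C(g_{rs}^2 P_s)$ is fixed. The destination constraint reduces to $J_2=C\!\left(g_{ds}^2 P_s+g_{dr}^2\beta_r\right)$. The optimal rate in this regime is rate-balanced: the relay uses just enough independent power $\beta_r$ so that $J_2$ meets $J_1$, no more. Setting $J_1=J_2$ and solving the resulting linear equation $g_{ds}^2 P_s+g_{dr}^2\beta_r = g_{rs}^2 P_s$ yields $\beta_r=\frac{g_{rs}^2-g_{ds}^2}{g_{dr}^2}P_s$, which is exactly \eqref{optPwrAllocationB}; I would note this is nonnegative precisely when $g_{rs}^2>g_{ds}^2$, consistent with the regime boundary.

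Finally, for regime $\mathcal{R}2$ (block Markov coding), Lemma~\ref{lemmaUsersFullPwr} gives full power at both nodes, so $\beta_r=0$ and $k_s\alpha_s=P_r$, i.e.\ $k_s=P_r/\alpha_s$, leaving $\alpha_s$ as the single free variable. Here both rate constraints are active (case~3 of the dual analysis), so I would set $J_1=J_2$ again, now with $\beta_s=P_s-\alpha_s$ and the beamforming cross-term $2g_{ds}g_{dr}\sqrt{k_s}\,\alpha_s=2g_{ds}g_{dr}\sqrt{P_r\alpha_s}$ present in $J_2$. Imposing $C(g_{rs}^2(P_s-\alpha_s))=C\!\big(g_{ds}^2 P_s+g_{dr}^2 P_r+2g_{ds}g_{dr}\sqrt{P_r\alpha_s}\big)$ and cancelling the $C(\cdot)$ gives an equation that, after substituting $t=\sqrt{\alpha_s}$, becomes quadratic in $t$; solving and selecting the physically admissible (nonnegative) root yields $\alpha_s=\xi$ with $\xi$ as stated. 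The main obstacle is this last step: verifying that the discriminant is nonnegative throughout $\mathcal{R}2$ and that the chosen root is the one satisfying $0\le\alpha_s\le P_s$ while actually maximizing (rather than just balancing) the rate. I would handle this by checking monotonicity---$J_1$ is strictly decreasing and $J_2$ strictly increasing in $\alpha_s$---so their unique crossing is indeed the maximizer of the minimum, and by confirming the sign choice of the square root is forced by requiring $\xi\ge0$.
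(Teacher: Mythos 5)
Your overall route is the same as the paper's (Appendix A): fix the signal structure in each regime using Lemma \ref{lemmaUsersFullPwr} and the case analysis behind Theorem \ref{tablethm}, then obtain the one remaining free parameter by balancing $J_1$ and $J_2$. Your $\mathcal{R}1$ step (the minimum $\beta_r$ for which $J_2\ge J_1$, since a larger $\beta_r$ cannot raise $R=J_1$ and only wastes relay power) and your $\mathcal{R}2$ step (quadratic in $\sqrt{\alpha_s}$ from $J_1=J_2$ with $k_s\alpha_s=P_r$, together with the observation that $J_1$ is decreasing and $J_2$ increasing in $\alpha_s$ so the crossing maximizes the minimum) are exactly the paper's computations, and your formula for $\xi$ checks out.

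The one genuine gap is your justification of $\beta_r=0$ in regime $\mathcal{R}2$. Lemma \ref{lemmaUsersFullPwr} only yields full relay power, i.e. $k_s\alpha_s+\beta_r=P_r$; it says nothing about how that budget is split between the coherent component $k_s\alpha_s$ and the independent component $\beta_r$, so ``full power at both nodes, so $\beta_r=0$'' is a non sequitur. The paper closes this in Case 3 of Appendix A by combining $\nabla_{k_s}\mathcal{L}=0$ with the derivative with respect to $\beta_r$ to conclude $\lambda_6>0$ and hence $\beta_r=0$ by complementary slackness. An elementary patch consistent with your style: for fixed $\alpha_s>0$ and fixed total relay power $k_s\alpha_s+\beta_r=P_r$, the constraint $J_1$ does not depend on the split, while $J_2$ is strictly increasing in $k_s\alpha_s$ through the beamforming cross term $2g_{ds}g_{dr}\sqrt{k_s}\alpha_s=2g_{ds}g_{dr}\sqrt{(k_s\alpha_s)\alpha_s}$ (the term $g_{dr}^2(k_s\alpha_s+\beta_r)$ being constant), so the optimum puts all relay power in the coherent part. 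With that supplied, the remaining items you flagged are immediate: the regime condition $g_{rs}^2P_s>g_{ds}^2P_s+g_{dr}^2P_r$ makes the discriminant strictly positive and forces the ``$+$'' root to be the nonnegative one.
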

\begin{proof}
See Appendix A
\end{proof}

Fig. \ref{fig:simulation} provides an example of these link-state regimes on a 2D plane. The distance between the source and destination is fixed at 20 meters while the relay location varies over the entire plane. A path loss exponent of 3.6 without any small scale fading is assumed. For DF strategies, when the source-to-relay channel is strongest, coherent block Markov coding is optimal. When the relay is further away from the source, the source-to-relay channel is not quite as strong and independent coding is optimal. If the relay is very far from the source and the source-to-relay channel is weaker than the direct link, it is optimal to not utilize the relay.

\begin{figure}
\begin{center}
\includegraphics[width=0.3\textwidth]{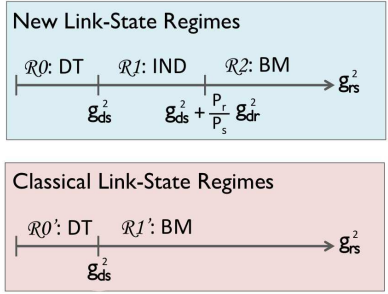}
    \caption{Link-state regimes: new versus classical}
     \label{fig:linkstateRegimes}
     \end{center}
\end{figure}

\subsection{Comparison of New and Classical Link-State Regimes}
Prior to the link-state regime analysis presented in this work, it was believed that block Markov coding was the optimal DF relaying strategy whenever the source-to-relay link was stronger than the direct link \cite{hsce612}. This belief results in the classical link-state regimes consisting of only two link-state regimes: direct transmission and block Markov coding. As demonstrated in this paper, a different set of link-state regimes contains three regimes with the addition of the independent coding regime between the direct transmission and block Markov link-state regimes as illustrated in Fig. \ref{fig:linkstateRegimes}. Even though the two sets of link-state regimes result in the same achievable rate, they have significantly different implications in terms of relay power consumption. Link-state regimes $\mathcal{R}0$ and $\mathcal{R}2$ are the same in the classical version as in the new version. The relay power savings are obtained in the new link-state regime $\mathcal{R}1$ through independent coding.

Power savings are achievable in link-state regime $\mathcal{R}1$ because in this link-state regime, the relay performs only independent coding (Theorem \ref{tablethm}). Without the block Markov signal component, $\alpha_s=k_s=0$ and the expended power at the relay is equivalent to $\beta_r= \frac{g_{rs}^2-g_{ds}^2}{g_{dr}^2}P_s$ from \eqref{optPwrAllocationB}. For the ranges of source-to-relay link gains within $\mathcal{R}1$, the required relay transmit power is always less than the maximum $P_r$, except at the upper boundary of the regime when $\beta_r=P_r$. Thus regime $\mathcal{R}1$ allows the relay to conserve power while still achieving the maximum possible rate. Even though this power savings applies only in one regime, the probability of link-state regime $\mathcal{R}1$ is non-negligible in fading as will be demonstrated in Section \ref{sec:OUTA}. This in turn leads to significant power savings in most fading channel environments.

In link-state regime $\mathcal{R}1$, the source-to-relay link magnitude is less than the combination of the source-to-destination and relay-to-destination link magnitudes as shown in Theorem \ref{tablethm}. As such, any information that the relay can decode from the source can also be decoded at the destination by combining the incoming links from the source and relay. Thus it is optimal for the source to utilize full power for the new message and not to repeat the message of the previous block, as in coherent block Markov coding. In this case, the relay needs to expend just enough power to send the new message over the relay-to-destination link. Consuming more power than $\beta_r$ in link-state regime $\mathcal{R}1$ is wasteful as it does not improve the rate.

\subsection{Comparison of Fundamental Relaying Strategies}
Next we contrast the achievable rate of composite DF relaying to other fundamental relaying strategies, including the traditional block-Markov based DF, amplify-forward (AF), compress-forward (CF), and the cut-set bound.  We do not include compute-forward in this comparison as compute-forward applies to a multi-source channel \cite{compute}. CF outperforms AF but is significantly more complex to implement \cite{hsce612}. The CF strategy is known to outperform traditional block-Markov DF when the relay is close to the destination, but is inferior to DF when the SNR of the source-to-relay link exceeds a certain threshold \cite{ref87}.

In Fig. \ref{fig:compareStrategies} we compare the achievable rates of AF, CF, block-Markov DF, composite DF relaying, and the cut-set bound. The simulation setup is the same as that of Fig. \ref{fig:simulation}, where the relay location varies along the line connecting the source and destination. The composite DF relaying scheme in this paper achieves the same rate as that of traditional DF relaying, but results in power savings at the relay in link-state regime $\mathcal{R}1$, indicated in light green. As established, when the relay is close to the source, DF outperforms CF; when the relay is close to the destination, CF begins to outperform DF. Note however, that there exists a range of relay locations in which the relay is closer to the destination and independent DF relaying still outperforms CF relaying, while saving significant relay power. In Fig. \ref{fig:compareStrategies}, this range applies from approximately 10 meters to 12.5 meters. This result highlights the novelty of the composite DF relaying scheme: the performance of independent DF with reduced relay power can match that of block-Markov DF and outperform that of CF, even when the relay is closer to the destination than to the source.

\begin{figure}[t]
   \begin{center}
    \includegraphics[width=0.42\textwidth]{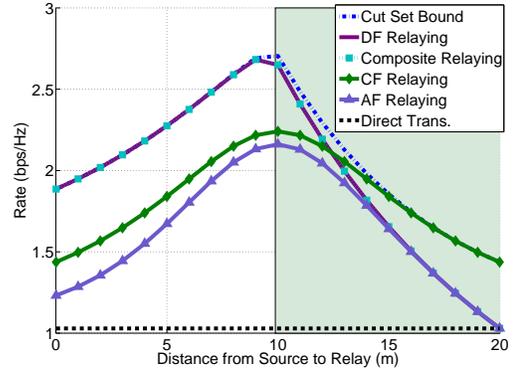} 
    \caption{Comparison of relaying strategies. The regions in which the composite DF relay performs block Markov coding (white) or independent coding (green) are marked on the figure. ($\gamma=3.6$, SNR$=5$dB)}\label{fig:compareStrategies}      
    \end{center}
\end{figure}  

\section{Link Adaptation in a Fading Environment}\label{sec:csi}
The link-state regimes are directly useful in a fading environment in order to perform link adaptation. We utilize the link regime results of the previous section in order to perform link adaptation of the composite scheme in a fading environment. Such adaptation requires various components of CSI at all nodes. Receiver CSI is a standard assumption and can be obtained directly by channel estimation. Transmitter CSI can be obtained via feedback from other nodes or via reciprocity. One particular node's knowledge of the link gains between the other two nodes would have to come from feedback. Here we discuss what CSI is necessary to implement the presented composite scheme. Specifically, we examine link adaptation based on perfect receive and long-term transmit CSI, a practical form of CSI, to reduce channel acquisition overhead.

\subsection{Required CSI to Adapt to Instantaneous Fading}
Certain components of CSI are required at the source and relay to compute both the appropriate link-state regime and the corresponding power allocation factors. Note that the link-state regimes in Theorem \ref{tablethm} can also be represented in terms of the received SNRs as
\begin{align}\label{rxSNR}
\mathcal{R}0: &\gamma_s \leq \gamma_o, \ \ \ \ \mathcal{R}1: \gamma_o<\gamma_s \leq \gamma_d,\ \ \mathcal{R}2: \gamma_s > \gamma_d\\
\text{where } &\gamma_s=g_{rs}^2P_s, \ \ \ \ \ \  \gamma_o=g_{ds}^2P_s, \ \ \ \ \  \gamma_d=\gamma_o+g_{dr}^2P_r.\nonumber
\end{align}
\noindent Here $\gamma_s$ is the instantaneous received SNR at the relay, $\gamma_o$ represents the received SNR at the destination if the source uses direct transmission, and $\gamma_d$ represents the received SNR at the destination when the relay is fully utilized in the independent mode.

To determine the appropriate link-state regime, the source and relay need information about the SNRs in \eqref{rxSNR}. Simply to determine whether the relay is utilized, the relative magnitude between the direct link and the source-to-relay link must be known by the source and relay. If the source-to-relay link is stronger than the direct link, then the relative magnitude among $\gamma_s$ and $\gamma_d$ must be known to discern between regimes $\mathcal{R}1$ and $\mathcal{R}2$. Thus to determine which link-state regime is applicable, the source and relay must know the relative order among $\gamma_o$, $\gamma_d$, and $\gamma_s$.

To determine the power allocation within a link-state regime, different CSI is necessary depending on the link-state regime. First we rewrite the optimal power allocation in \eqref{optPwrAllocation} in terms of the defined SNRs as:
\begin{subequations}
\begin{align}\label{optPwrAllocationRxSNR}
&\mathcal{R}1:\ \alpha_s=k_s=0,\ \ \beta_r= \frac{\gamma_s-\gamma_o}{\gamma_d-\gamma_o}P_r \\
&\mathcal{R}2:\ \beta_r=0, \ \ \ k_s=\frac{P_r}{\alpha_s}\\
& \ \ \ \ \ \ \ \alpha_s\!\!=\!\!\left[\!\frac{-\sqrt{\gamma_o \gamma_d \!-\! \gamma_o^2 }\!+\!\sqrt{\gamma_o(\gamma_d-\gamma_o)\! -\! \gamma_s(\gamma_d \!-\! \gamma_s)}}{\gamma_s}\!\right]^2\!P_s\nonumber
\end{align}
\end{subequations}
\noindent Once the link-state regime is accurately determined, the source needs no further CSI in link-state regimes $\mathcal{R}0$ and $\mathcal{R}1$ because $\beta_s=P_s$ in both regimes. In link-state regime $\mathcal{R}2$, the source needs CSI for the three SNRs, $\gamma_o$, $\gamma_d$, and $\gamma_s$, in \eqref{rxSNR} in order to compute $\alpha_s$.

For the relay to determine the power allocation, the SNRs in \eqref{rxSNR} must be known only in one link-state regime. In $\mathcal{R}0$ and $\mathcal{R}2$, the relay does not need any CSI because in $\mathcal{R}0$ the relay is not utilized and in $\mathcal{R}2$, the relay allocates full power to the block Markov component. However, in regime $\mathcal{R}1$, the relay needs knowledge of all three SNRs in \eqref{rxSNR} in order to compute $\beta_r$. We assume the relay always has perfect CSI for $\gamma_s$ due to receiver channel estimation, but in $\mathcal{R}1$ $\gamma_o$ and $\gamma_d$ must be fed back to the relay. Thus, each node only needs knowledge of the SNRs in one link-state regime for the composite scheme to adapt to instantaneous fading.

\subsection{Adaptation with Practical CSI and Perfect CSI}\label{sec:csiModels}
The perfect CSI model implies that each node has perfect instantaneous CSI of all links. Obtaining perfect instantaneous CSI in a fading environment is challenging and unrealistic in fast fading, in which the system may be unable to adjust to rapid channel variations. Here we also investigate other CSI models. The long-term CSI model implies that each node only knows the average gain of all links, including the incoming links, forward links, and other non-local links. We also consider a more realistic model called practical CSI, in which nodes have perfect receive CSI of incoming links, long-term transmit CSI of the forward links, and long-term CSI for all other non-local links. Even though this practical CSI appears intuitive and seems to be used frequently in the literature, it is used mostly in the context of a single transmitter single receiver link. The use of a relay which both receives at and transmits from the same node creates a finer distinction among these different CSI models: long-term, practical, and perfect CSI.

Under the practical CSI model, the relay knows the instantaneous receive SNR, $\gamma_s$, through receiver channel estimation of the incoming link, but only knows the average $\bar{\gamma_d }$ and $\bar{\gamma_o}$ on the outgoing link and the non-local link. The source only knows the average $\bar{\gamma_s}$, $\bar{\gamma_d }$, and $\bar{\gamma_o}$. The relay uses the accurate instantaneous information about $\gamma_s$ only in its power allocation but not in the link-state regime identification. With practical CSI, the link-state regime is identified based only on the long-term channel statistics even though the relay has more accurate information about $\gamma_s$. This is because the source and relay must agree in their choice of link-state regime. However, in link-state regime $\mathcal{R}1$, the relay does use the instantaneous receive SNR $\gamma_s$ in computing $\beta_r$. This power allocation ($\beta_r$) at the relay is the only difference between the practical CSI and long-term CSI models. With the perfect CSI model, the source and relay both have perfect instantaneous knowledge of $\gamma_o$, $\gamma_d$, and $\gamma_s$, and the link-state regime and power allocation are adapted to each fading channel realization.

Power allocation parameters in each link regime with long-term CSI ($\beta_{r,\text{lt}},\alpha_{s,\text{lt}}$), practical CSI ($\beta_{r,\text{prac}},\alpha_{s,\text{prac}}$), and perfect CSI ($\beta_{r,\text{perf}},\alpha_{s,\text{perf}}$) in \eqref{optPwrAllocation} can be rewritten as:
\begin{align}\label{optPwrAllocationRxSNR2}
\mathcal{R}1:&\beta_{r,\text{lt}}= \frac{\bar{\gamma_s}-\bar{\gamma_o}}{\bar{\gamma_d}-\bar{\gamma_o}}P_r, \ \ \beta_{r,\text{prac}}= \frac{\gamma_s-\bar{\gamma_o}}{\bar{\gamma_d}-\bar{\gamma_o}}P_r, \\
& \beta_{r,\text{perf}}= \frac{\gamma_s-\gamma_o}{\gamma_d-\gamma_o}P_r \nonumber\\
\mathcal{R}2\!:\ &\alpha_{s,\text{lt}}\!\!=\!\!\left[\frac{\!-\sqrt{\bar{\gamma_o} \bar{\gamma_d} \!-\! \bar{\gamma_o}^2 }\!+\!\sqrt{\bar{\gamma_o}(\bar{\gamma_d}-\bar{\gamma_o}) \!-\! \bar{\gamma_s}(\bar{\gamma_d} \!-\! \bar{\gamma_s})}}{\bar{\gamma_s}}\right]^2\!\! \!P_s\nonumber\\
&\alpha_{s,\text{prac}}\!=\!\alpha_{s,\text{lt}}\nonumber\\
&\alpha_{s,\text{perf}}\!\!=\!\!\left[\frac{\!-\sqrt{\gamma_o \gamma_d \!-\! \gamma_o^2 }\!+\!\sqrt{\gamma_o(\gamma_d-\gamma_o) \!-\! \gamma_s(\gamma_d \!-\! \gamma_s)}}{\gamma_s}\right]^2\!\!\!\! P_s.\nonumber
\end{align}

Performance of composite DF relaying varies with different CSI models. With the perfect CSI model, the composite DF scheme will always outperform direct transmission because the relay is only used when the relaying rate is higher than the direct transmission rate. The same cannot always be said with long-term or practical CSI, however. For a small fraction of node-distance configurations, there are some nuances in the rate equations. Define the direct transmission rate as $J_o\!=\!\log(1+g_{ds}^2P_s)$. Due to the lack of precise CSI knowledge causing a sub-optimal power allocation in a few particular node-distance configurations, even though $\mathbb{E}\left[J_o \right]<\mathbb{E}\left[J_1 \right]$ always holds, $\mathbb{E}\left[R=\min\left(J_1,J_2\right) \right]$ is not necessarily greater than $\mathbb{E}\left[J_o \right]$ as $J_2$ could be less than $J_1$, causing the average relaying rate to fall below the average direct transmission rate. We note that these events are rare and only happen in a select few node-distance configurations, as illustrated in the numerical results Section \ref{sec:numresults}. Even though there is a slight loss in transmission rate, the affected regions and the rate loss are insignificant such that the advantages of such simpler and more realistic models as practical and long-term CSI prevail.

\section{Outage Analysis}\label{sec:OUTA}
The outage probability is another important performance measure for wireless services that require a sustained minimum target rate. In this section, we first derive the probability of each link-state regime, which helps in the subsequent derivation of the outage probability. We then derive the outage probability for the composite DF scheme in closed form, taking into account outages at both the relay and destination; we also examine the diversity order achievable by this composite scheme.

\subsection{Probability of Each Link-State Regime}\label{sec:prRegime}
We first analytically derive the probability of each link-state regime for a given distance configuration among the source, destination, and relay. From (\ref{fadingch}), note that $g_{k}^2=|\tilde{h}|^2/d_{k}^\gamma$ where $\tilde{h}$ is a Rayleigh fading component. Thus the squared amplitude of each link coefficient is an exponential random variable denoted as $g_{k}^2 \sim \text{exp}\left(\lambda_{k}\right)$. Note that for $g_{dr}^2$, we actually look at the distribution of $g_{dr}^2\frac{P_r}{P_s}$ as this is the term that appears in the link-state regime boundaries in Theorem \ref{tablethm}. Define the exponential parameters:
\begin{align}\label{expParameters}
&\lambda_{rs}=d_{rs}^{\gamma},\ \ \lambda_{ds}=d_{ds}^{\gamma},\ \ \lambda_{dr}=d_{dr}^{\gamma} \ \ \tilde{\lambda}_{dr}=\frac{P_s}{P_r}d_{dr}^{\gamma}.
\end{align}

\begin{lem}\label{theoremprLS}
For parameters in (\ref{expParameters}), the probability of each link-state regime (Theorem \ref{tablethm}) in Rayleigh fading is as follows:
\begin{align}\label{prLinkStates}
&\mathsf{Pr} (\mathcal{R}0)\!=\!\frac{\lambda_{rs}}{\lambda_{ds}\!+\!\lambda_{rs}}, \
\mathsf{Pr} (\mathcal{R}1)\!=\!\frac{\lambda_{rs}\lambda_{ds}}{\left(\lambda_{rs}\!+\!\lambda_{ds}\right)\left(\lambda_{rs}\!+\!\tilde{\lambda}_{dr}\right)},\nonumber\\
&\mathsf{Pr} (\mathcal{R}2)\!=\!\frac{\lambda_{ds}\tilde{\lambda}_{dr}}{\left(\lambda_{rs}\!+\!\lambda_{ds}\right)\left(\lambda_{rs}\!+\!\tilde{\lambda}_{dr}\right)}.
\end{align}
\end{lem}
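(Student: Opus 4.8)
The plan is to translate each link-state regime of Theorem \ref{tablethm} into an event involving the three independent random variables $g_{rs}^2$, $g_{ds}^2$, and the rescaled gain $g_{dr}^2 P_r/P_s$, and then evaluate the probabilities directly from the exponential distributions. Writing $A=g_{rs}^2$, $B=g_{ds}^2$, and $D=g_{dr}^2 P_r/P_s$, the boundaries in \eqref{gr1} become $\mathcal{R}0=\{A\le B\}$, $\mathcal{R}1=\{B<A\le B+D\}$, and $\mathcal{R}2=\{A>B+D\}$. Each squared amplitude is exponential as noted just before the lemma, so $A\sim\text{exp}(\lambda_{rs})$ and $B\sim\text{exp}(\lambda_{ds})$; the scaling by $P_r/P_s$ is precisely why $D\sim\text{exp}(\tilde\lambda_{dr})$ with $\tilde\lambda_{dr}=(P_s/P_r)d_{dr}^{\gamma}$ rather than $\lambda_{dr}$, since multiplying an exponential by a positive constant rescales its mean and hence inverts the rate. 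Independence of the three variables, assumed in the channel model, is what will license the factorizations below.

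First I would compute $\mathsf{Pr}(\mathcal{R}0)=\mathsf{Pr}(A\le B)$. Conditioning on $B$ and using $\mathsf{Pr}(A\le b)=1-e^{-\lambda_{rs}b}$, integrating against the density of $B$ gives the standard expression $\lambda_{rs}/(\lambda_{rs}+\lambda_{ds})$ for one exponential being smaller than an independent one, matching the stated $\mathsf{Pr}(\mathcal{R}0)$.

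The one step that requires an idea is $\mathcal{R}2$, where the sum $B+D$ of two independent exponentials appears; I would avoid convolving densities. Instead, condition on both $B$ and $D$ and use the tail $\mathsf{Pr}(A>s)=e^{-\lambda_{rs}s}$ to write $\mathsf{Pr}(\mathcal{R}2)=\mathbb{E}\!\left[e^{-\lambda_{rs}(B+D)}\right]$. By independence this factors as $\mathbb{E}[e^{-\lambda_{rs}B}]\,\mathbb{E}[e^{-\lambda_{rs}D}]$, and each factor is just the Laplace transform of an exponential evaluated at $\lambda_{rs}$, namely $\lambda_{ds}/(\lambda_{rs}+\lambda_{ds})$ and $\tilde\lambda_{dr}/(\lambda_{rs}+\tilde\lambda_{dr})$. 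Their product is exactly the stated $\mathsf{Pr}(\mathcal{R}2)$. Recognizing this factorization is the \emph{only} real obstacle; after it, the computation is routine.

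Finally, I would obtain $\mathcal{R}1$ by subtraction rather than a two-dimensional integral. Since $\{A>B\}$ partitions disjointly into $\mathcal{R}1$ and $\mathcal{R}2$, we have $\mathsf{Pr}(\mathcal{R}1)=\mathsf{Pr}(A>B)-\mathsf{Pr}(\mathcal{R}2)=\lambda_{ds}/(\lambda_{rs}+\lambda_{ds})-\mathsf{Pr}(\mathcal{R}2)$, which simplifies to $\lambda_{rs}\lambda_{ds}/[(\lambda_{rs}+\lambda_{ds})(\lambda_{rs}+\tilde\lambda_{dr})]$. As a consistency check I would verify the three probabilities sum to one, which holds because $\mathsf{Pr}(\mathcal{R}1)+\mathsf{Pr}(\mathcal{R}2)$ collapses to $\lambda_{ds}/(\lambda_{rs}+\lambda_{ds})=1-\mathsf{Pr}(\mathcal{R}0)$; this both confirms the algebra and certifies that the three regimes are exhaustive.
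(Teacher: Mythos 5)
Your proof is correct, and it reaches the stated formulas by a genuinely different route than the paper. The paper attacks $\mathsf{Pr}(\mathcal{R}1)$ head-on: it quotes the density of the sum $W=g_{ds}^2+\tfrac{P_r}{P_s}g_{dr}^2$ of two independent exponentials (the hypoexponential form $\tfrac{\lambda_{ds}\tilde{\lambda}_{dr}}{\tilde{\lambda}_{dr}-\lambda_{ds}}(e^{-\lambda_{ds}w}-e^{-\tilde{\lambda}_{dr}w})$), evaluates the resulting one-dimensional integral, and then obtains $\mathsf{Pr}(\mathcal{R}2)$ from the identity $\mathsf{Pr}(\mathcal{R}0)+\mathsf{Pr}(\mathcal{R}1)+\mathsf{Pr}(\mathcal{R}2)=1$. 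You invert the order: you compute $\mathsf{Pr}(\mathcal{R}2)=\mathbb{E}\left[e^{-\lambda_{rs}(B+D)}\right]$ directly by conditioning on $(B,D)$ and factoring the expectation into two Laplace transforms, then recover $\mathsf{Pr}(\mathcal{R}1)$ as $\mathsf{Pr}(A>B)-\mathsf{Pr}(\mathcal{R}2)$. Your version buys two things: it never needs the convolution density of $W$ at all, and it remains valid in the degenerate case $\lambda_{ds}=\tilde{\lambda}_{dr}$, where the paper's quoted density has a removable $0/0$ and would have to be replaced by the Erlang form. It also cleanly handles the lower boundary $B<A$ of $\mathcal{R}1$ via the disjoint partition $\{A>B\}=\mathcal{R}1\cup\mathcal{R}2$, a point the paper's integral $\mathsf{Pr}(X<W)$ glosses over. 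The paper's route, in exchange, produces $f_W$ as an explicit intermediate object. Both proofs treat $\mathsf{Pr}(\mathcal{R}0)$ identically and both rest on the same correct observation that $D=g_{dr}^2P_r/P_s\sim\text{exp}(\tilde{\lambda}_{dr})$; your closing consistency check that the three probabilities sum to one is exactly the identity the paper invokes, used in the opposite direction.
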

\begin{proof}
See Appendix C.
\end{proof}

By employing Lemma \ref{theoremprLS} and simply knowing the distances between nodes, path loss exponent, and node transmit power constraints in a Rayleigh fading environment, the probability of each link-state regime can be calculated analytically. This closed form probability of each link-state regime will help in computing the outage in the next section and in computing the expected relay power savings in Section \ref{sec:fadingSection1}.

\subsection{Outage Probability}
Since we use backward decoding, outage events are confined to one transmission block and do not span multiple blocks, even though information is sent over two consecutive blocks with block Markov coding. In order to derive the overall outage probability, we analyze the outage events in each link-state regime. For link-state regime $\mathcal{R}0$, the outage probability is obtained as in  point-to-point communication under the condition that $g_{rs}\leq g_{ds}$. For link-state regime $\mathcal{R}1$, outage occurs only when the relay is in outage, since if the relay can decode the source's information, then so can the destination because of stronger combined links from the source and relay. For link-state regime $\mathcal{R}2$, outage can occur separately at the relay or destination. In this case, an outage can occur at the destination even when there is no outage at the relay. Next we formulate the outage probability.

\begin{lem}\label{outage formulation}
For a target rate $R$, given the relay rate constraints in Theorem \ref{nathr1df} and the direct transmission rate defined as $J_o=\log(1+g_{ds}^2P_s)$, the average outage probability $(\bar{\mathcal{P}}_{out})$ of the composite DF scheme can be formulated as follows:
\begin{align}\label{outtc}
\bar{\mathcal{P}}_{out}&=\mathcal{P}_{dt}+\mathcal{P}_{relay}+\mathcal{P}_{dest},\\
\;\text{with\ \ }\mathcal{P}_{dt}&=\mathsf{Pr}\left[\left( R> J_o\right) \cap \mathcal{R}0\right],\nonumber\\
\mathcal{P}_{relay}&=\mathsf{Pr}\left[ \left(R> J_1\right) \cap \left( \mathcal{R}1 \cup \mathcal{R}2\right) \right],\nonumber\\
\mathcal{P}_{dest}&=\mathsf{Pr}\left[ \left(J_2<R\leq J_1\right) \cap \mathcal{R}2\right]\nonumber
\end{align}
\noindent where $\mathcal{P}_{dt}$ is the outage probability of direct transmission that occurs link-state regime $\mathcal{R}0$; $\mathcal{P}_{relay}$ is the outage probability at the relay applicable in link-state regimes $\mathcal{R}1$ and $\mathcal{R}2$; and $\mathcal{P}_{dest}$ is the outage probability at the destination applicable only in link-state regime $\mathcal{R}2$.
\end{lem}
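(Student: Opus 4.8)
The plan is to decompose the outage event by conditioning on the three link-state regimes, which by Theorem~\ref{tablethm} partition the range of $g_{rs}^2$ (and hence the joint fading realization) into the mutually exclusive and exhaustive sets $\mathcal{R}0$, $\mathcal{R}1$, $\mathcal{R}2$. Because these regimes are disjoint, the overall outage event is the disjoint union of the outage events restricted to each regime, so the law of total probability lets me simply add the three contributions rather than invoke inclusion--exclusion. The remaining work is to characterize the correct outage event inside each regime using the optimal power allocation of Theorem~\ref{thmoptimalpowerallocation}.

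First I would treat the two ``easy'' regimes. In $\mathcal{R}0$ the relay is idle ($\alpha_s=k_s=\beta_r=0$), so the only achievable rate is the direct-transmission rate $J_o$ and an outage occurs exactly when $R>J_o$; this yields $\mathcal{P}_{dt}$. In $\mathcal{R}2$ the relay performs block Markov coding, so decoding must succeed at both the relay and the destination: the relay fails when $R>J_1$, while the destination fails despite a successful relay when $J_2<R\le J_1$. These two events are disjoint by construction (one needs $R>J_1$, the other $R\le J_1$), contributing a relay term and the destination term $\mathcal{P}_{dest}$.

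The crux, and the step I expect to be the main obstacle to state cleanly, is the $\mathcal{R}1$ regime, where I must justify why no separate destination-outage term appears. Here the relay uses only independent coding with $\alpha_s=k_s=0$, $\beta_s=P_s$, and $\beta_r=(g_{rs}^2-g_{ds}^2)P_s/g_{dr}^2$. Substituting these into the rate expressions of Theorem~\ref{nathr1df} gives $J_1=C(g_{rs}^2P_s)$ and, since the cross term $2g_{ds}g_{dr}\sqrt{k_s}\alpha_s$ vanishes, $J_2=C\bigl(g_{ds}^2P_s+g_{dr}^2\beta_r\bigr)=C(g_{rs}^2P_s)=J_1$. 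Thus in $\mathcal{R}1$ the destination constraint is never the binding one: whenever the relay can decode ($R\le J_1$) the destination can as well ($R\le J_2=J_1$), so the outage in $\mathcal{R}1$ reduces to the relay event $R>J_1$ alone. This is exactly the informal claim in the text that the combined source and relay links at the destination are strong enough to decode whatever the relay decodes.

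Finally I would merge the relay contributions. The relay-outage event $R>J_1$ appears identically in both $\mathcal{R}1$ and $\mathcal{R}2$, so those two pieces combine into $\mathcal{P}_{relay}=\mathsf{Pr}\bigl[(R>J_1)\cap(\mathcal{R}1\cup\mathcal{R}2)\bigr]$. Collecting the direct-transmission term from $\mathcal{R}0$, the merged relay term from $\mathcal{R}1\cup\mathcal{R}2$, and the destination term from $\mathcal{R}2$ gives the claimed sum $\bar{\mathcal{P}}_{out}=\mathcal{P}_{dt}+\mathcal{P}_{relay}+\mathcal{P}_{dest}$, with no overcounting since $\mathcal{P}_{dt}$ lives in $\mathcal{R}0$, $\mathcal{P}_{relay}$ requires $R>J_1$, and $\mathcal{P}_{dest}$ requires $R\le J_1$, making all three constituent events pairwise disjoint.
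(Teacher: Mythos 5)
Your proposal is correct and follows essentially the same route as the paper: a regime-by-regime decomposition using the disjointness of $\mathcal{R}0$, $\mathcal{R}1$, $\mathcal{R}2$, with the key observation that in $\mathcal{R}1$ only a relay outage is possible because $J_1\le J_2$ there, and a merging of the relay-outage events across $\mathcal{R}1$ and $\mathcal{R}2$. Your explicit verification that $J_1=J_2$ under the optimal $\mathcal{R}1$ power allocation is a slightly more concrete justification of the step the paper simply asserts, but it is the same argument.
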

\begin{proof}
See Appendix B.
\end{proof}

Using the Rayleigh distribution of $g_{ds},$ $g_{rs}$ and $g_{dr}$, $\mathcal{P}_{dt},$ $\mathcal{P}_{relay}$ and $\mathcal{P}_{dest}$ are evaluated in closed form as in the following theorem.
\begin{thm}\label{cor_out}
The average outage probability $(\bar{\mathcal{P}}_{out})$ of the composite DF relaying scheme for a given power allocation can be evaluated as follows:
\begin{align}\label{outevl}
\bar{\mathcal{P}}_{out}&=\mathcal{P}_{dt}+\mathcal{P}_{relay}+\mathcal{P}_{dest},\;\text{with}\\
\mathcal{P}_{dt}&=1-e^{-\lambda_{ds}\beta_1^2}-c_1\left(1-e^{-\beta_1^2(\lambda_{rs}+\lambda_{ds})}\right)
\nonumber\\
\mathcal{P}_{relay}&=1-e^{-\lambda_{rs}\eta_1^2}-c_2\left(1-e^{-\eta_1^2(\lambda_{rs}+\lambda_{ds})}\right)\nonumber\\
\mathcal{P}_{dest}&=e^{-\lambda_{rs}\eta_1^2}\left(1-e^{-\lambda_{ds}\beta_1^2}\right)\nonumber\\
&\ -e^{-\lambda_{rs}\eta_1^2}\int_{0}^{\beta_1}2\lambda_{ds}g_{ds}
e^{-\left(\lambda_{ds}g_{ds}^2+\tilde{\lambda}_{dr}\zeta_1^2\right)}dg_{ds}\nonumber\\
\text{where } c_1&=\frac{\lambda_{ds}}{\lambda_{rs}+\lambda_{ds}},\quad c_2=\frac{\lambda_{rs}}{\lambda_{rs}+\lambda_{ds}}\nonumber\\
\beta_1&=\sqrt{\frac{2^R-1}{P_s}},\quad
\eta_1=\sqrt{\frac{2^R-1}{\beta_s}},\nonumber\\
\zeta_1&=\frac{-g_{ds}\sqrt{\alpha_s}+\sqrt{g_{ds}^2(\alpha_s-P_s)+2^R-1}}{\sqrt{P_s}}.\nonumber
\end{align}

\end{thm}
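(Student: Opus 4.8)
The plan is to start from the event decomposition in Lemma~\ref{outage formulation} and convert each of the three rate events into an equivalent event on the squared link amplitudes $g_{ds}^2,g_{rs}^2$ and the scaled gain $g_{dr}^2 P_r/P_s$, which by \eqref{expParameters} are independent exponential random variables with parameters $\lambda_{ds},\lambda_{rs},\tilde\lambda_{dr}$. For a given power allocation, $\alpha_s,\beta_s,k_s$ are fixed constants, so the relay-decode threshold $R\le J_1$ becomes $g_{rs}^2\ge\eta_1^2$ with $\eta_1^2=(2^R-1)/\beta_s$, and the direct-transmission threshold $R>J_o$ becomes $g_{ds}^2<\beta_1^2$ with $\beta_1^2=(2^R-1)/P_s$. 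Each probability then reduces to integrating the product of exponential densities over the corresponding region.

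First I would handle $\mathcal{P}_{dt}$ and $\mathcal{P}_{relay}$, which are routine. For $\mathcal{P}_{dt}=\mathsf{Pr}[g_{ds}^2<\beta_1^2,\,g_{rs}^2\le g_{ds}^2]$ I integrate $g_{rs}^2$ from $0$ to $g_{ds}^2$, then $g_{ds}^2$ from $0$ to $\beta_1^2$; the inner integral gives $1-e^{-\lambda_{rs}g_{ds}^2}$, and the outer integral splits into two exponential integrals, producing the stated expression with $c_1=\lambda_{ds}/(\lambda_{rs}+\lambda_{ds})$. Symmetrically, $\mathcal{P}_{relay}=\mathsf{Pr}[g_{rs}^2<\eta_1^2,\,g_{rs}^2>g_{ds}^2]$ (the event $\mathcal{R}1\cup\mathcal{R}2$ is simply the complement of $\mathcal{R}0$) integrates to the stated form with $c_2=\lambda_{rs}/(\lambda_{rs}+\lambda_{ds})$.

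The crux is $\mathcal{P}_{dest}=\mathsf{Pr}[(J_2<R\le J_1)\cap\mathcal{R}2]$. Using $\beta_r=0$, $\alpha_s+\beta_s=P_s$ and $k_s=P_r/\alpha_s$ in $\mathcal{R}2$, $J_2$ reduces to $C\!\big(g_{ds}^2P_s+g_{dr}^2P_r+2g_{ds}g_{dr}\sqrt{\alpha_s P_r}\big)$. Substituting $t=g_{dr}\sqrt{P_r}$, the event $J_2<R$ becomes the quadratic inequality $t^2+2g_{ds}\sqrt{\alpha_s}\,t+(g_{ds}^2P_s-(2^R-1))<0$; solving for its positive root and scaling by $\sqrt{P_s}$ yields exactly the threshold $g_{dr}^2 P_r/P_s<\zeta_1^2$, with $\zeta_1$ as defined. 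I would note that this positive root exists only when the constant term is negative, i.e. $g_{ds}^2<\beta_1^2$, which supplies the upper limit $\beta_1$ of the residual integral and makes destination outage impossible outside that range.

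The key structural observation that makes the triple integral factor is that the regime condition $\mathcal{R}2$ (namely $g_{rs}^2>g_{ds}^2+g_{dr}^2P_r/P_s$) is automatically implied by the other constraints and can be dropped. Evaluating the threshold identity $g_{ds}^2 P_s+g_{dr}^2P_r+2g_{ds}g_{dr}\sqrt{\alpha_s P_r}=2^R-1$ at $g_{dr}^2P_r/P_s=\zeta_1^2$ and dividing by $P_s$ gives $g_{ds}^2+\zeta_1^2+2\sqrt{\alpha_s g_{ds}^2\zeta_1^2/P_s}=\beta_1^2$, hence $g_{ds}^2+g_{dr}^2P_r/P_s<g_{ds}^2+\zeta_1^2\le\beta_1^2$ on the destination-outage region. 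Since $\beta_s\le P_s$ forces $\eta_1^2\ge\beta_1^2$, on the relay-decode event $g_{rs}^2\ge\eta_1^2\ge\beta_1^2>g_{ds}^2+g_{dr}^2P_r/P_s$, so $\mathcal{R}2$ holds. Therefore $g_{rs}^2$ decouples and integrates to the prefactor $e^{-\lambda_{rs}\eta_1^2}$, leaving $\mathsf{Pr}[g_{ds}^2<\beta_1^2,\,g_{dr}^2P_r/P_s<\zeta_1^2]$; writing the inner probability as $1-e^{-\tilde\lambda_{dr}\zeta_1^2}$ and integrating over $g_{ds}$ produces the difference of $e^{-\lambda_{rs}\eta_1^2}(1-e^{-\lambda_{ds}\beta_1^2})$ and the residual integral. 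The main obstacle is precisely this decoupling argument: without recognizing that $\eta_1\ge\beta_1$ and the threshold-identity bound $g_{ds}^2+\zeta_1^2\le\beta_1^2$, the $\mathcal{R}2$ constraint would couple all three variables and block the clean factorization. The residual $g_{ds}$-integral is left unevaluated because the nested radical in $\zeta_1$ has no elementary antiderivative.
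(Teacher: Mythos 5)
Your proposal is correct and follows essentially the same route as the paper: the same decomposition from Lemma \ref{outage formulation}, the same reduction of each event to thresholds on the independent exponential variables $g_{ds}^2, g_{rs}^2, \tilde{g}_{dr}^2$, and the same factorization of the destination-outage triple integral after observing that the constraint on $g_{rs}$ reduces to $g_{rs}>\eta_1$. The only difference is in how that last decoupling is justified: the paper splits off a secondary term $\mathsf{Pr}[g_{ds}<\beta_1,\ \eta_2<\tilde g_{dr}<\zeta_1,\ g_{rs}>\eta_3]$ and kills it with a convexity argument on $\zeta_1-\eta_1$, whereas you bound $\eta_3\le\beta_1\le\eta_1$ directly from the quadratic threshold identity $g_{ds}^2+\zeta_1^2+2g_{ds}\zeta_1\sqrt{\alpha_s/P_s}=\beta_1^2$ --- an equivalent and arguably more transparent argument.
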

\begin{proof}
Using the Rayleigh distribution of $g_{ds},$ $g_{rs}$ and $g_{dr}$ as shown in Appendix B.
\end{proof}
Theorem \ref{cor_out} specifies the outage probabilities for a fixed power allocation. As shown in the following sections, these outage expressions for the composite DF scheme are instrumental in deriving the diversity order and in studying the impact on outage performance when conserving relay power in the independent coding link-state regime. Moreover, the results of Theorem \ref{cor_out} provide the basis for future works to optimize the power allocation for minimum outage probability, and to analyze system-wide outage performance when integrating the relay channel into a larger network.
\subsection{Diversity Analysis}\label{sec:DA}
The diversity order of a transmission scheme shows how fast the error rate decreases at high SNR and corresponds to the slope of the outage probability versus SNR. For the composite DF relaying scheme, the diversity order can be obtained by analyzing the outage probabilities in (\ref{outevl}) at high SNR. However, directly from the formulation in (\ref{outtc}), we can intuitively see that the outage probability for each link-state regime is proportional to $1/\text{SNR}^2$ as each regime requires at least two different links to be weak in order to have an outage at the relay or the destination. For $\mathcal{R}0$, outage occurs if $g_{ds}$ is weak and $g_{ds}\geq g_{rs}$. Hence, both $g_{ds}$ and $g_{rs}$ must be weak in order to have an outage in $\mathcal{R}0$. Similarly for $\mathcal{R}1$, an outage occurs when both $g_{ds}$ and $g_{rs}$ are weak. For $\mathcal{R}2$, an outage at the relay occurs if
$g_{rs}$ is weak and $g_{rs}\geq \sqrt{g_{ds}^2+\frac{P_r}{P_s}g_{dr}^2}$. Hence, all links ($g_{ds},$ $g_{rs}$ and $g_{dr}$) must be weak in order to have an outage at the relay, implying that relay outage probability in $\mathcal{R}2$ is proportional to $1/\text{SNR}^3$. If no outage occurs at the relay in $\mathcal{R}2$, then an outage at the destination occurs if
$g_{ds}$ and $g_{dr}$ are weak while $g_{rs}$ is strong, which leads to $\mathcal{P}_{dest}$ proportional to $1/\text{SNR}^2$.
The following corollary shows the asymptotic outage probabilities at high SNR.

\begin{cor}\label{thmDA}
The asymptotic outage probability of the composite DF relaying scheme at high SNR approaches the following values:
\begin{align}\label{asyb}
\mathbb{P}_{dt}&=\frac{\lambda_{ds}\lambda_{rs}(2^R-1)^2}{2P^2},\quad
\mathbb{P}_{relay}=\frac{\lambda_{ds}\lambda_{rs}(2^R-1)^2}{2bP^2},\\
\mathbb{P}_{dest}&=\frac{\lambda_{ds}\lambda_{dr}(2^R-1)^2}{2P^2}
\left(\sqrt{\frac{a}{a-1}}\sinh^{-1}\left(\sqrt{a-1}\right)-1\right),\nonumber
\end{align}
where $P=P_r=P_s$, $a=\alpha_s/P_s$, $b=\beta_s/P_s$ and $\sinh^{-1}$ is the inverse of the hyperbolic sine $\sinh$ function.
These expressions show that the diversity order is $2$, the maximum diversity for the basic relay channel.
\end{cor}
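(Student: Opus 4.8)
The plan is to start from the exact, finite-SNR outage expressions of Theorem~\ref{cor_out}, set $P_r=P_s=P$, and expand each of $\mathcal{P}_{dt}$, $\mathcal{P}_{relay}$, and $\mathcal{P}_{dest}$ as $P\to\infty$. The key observation is that the three SNR-dependent quantities in \eqref{outevl}, namely $\beta_1^2=(2^R-1)/P$, $\eta_1^2=(2^R-1)/\beta_s$, and (after writing $\alpha_s=aP_s$ and $2^R-1=\beta_1^2P_s$) $\zeta_1$, all vanish like $1/\sqrt{P}$ in amplitude, so every exponent appearing in \eqref{outevl} is $O(1/P)$. I would therefore substitute the Maclaurin expansion $e^{-x}=1-x+\tfrac12 x^2-O(x^3)$ into each term and track the leading surviving power of $1/P$.

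For $\mathcal{P}_{dt}$ and $\mathcal{P}_{relay}$, which share the form $1-e^{-\lambda u}-c\bigl(1-e^{-(\lambda_{rs}+\lambda_{ds})u}\bigr)$ with $u\in\{\beta_1^2,\eta_1^2\}$, the central point is that the $O(1/P)$ contributions (those linear in $u$) cancel \emph{exactly}, because $c_1=\lambda_{ds}/(\lambda_{rs}+\lambda_{ds})$ and $c_2=\lambda_{rs}/(\lambda_{rs}+\lambda_{ds})$ are precisely the weights that make the first-order terms agree. What remains is the quadratic term of order $u^2=O(1/P^2)$; for instance $\mathcal{P}_{dt}\to\tfrac12\lambda_{ds}\lambda_{rs}\beta_1^4=\lambda_{ds}\lambda_{rs}(2^R-1)^2/(2P^2)$, and $\mathcal{P}_{relay}$ follows identically with $\eta_1$ in place of $\beta_1$ and $c_2$ in place of $c_1$. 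This cancellation is the crux of the diversity statement: had the linear terms survived, the scheme would exhibit only diversity one.

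The harder term is $\mathcal{P}_{dest}$. Here I would first use the elementary identity $\int_0^{\beta_1}2\lambda_{ds}g_{ds}e^{-\lambda_{ds}g_{ds}^2}\,dg_{ds}=1-e^{-\lambda_{ds}\beta_1^2}$ to merge the two pieces of $\mathcal{P}_{dest}$ in \eqref{outevl} into the single integral $e^{-\lambda_{rs}\eta_1^2}\int_0^{\beta_1}2\lambda_{ds}g_{ds}e^{-\lambda_{ds}g_{ds}^2}\bigl(1-e^{-\tilde{\lambda}_{dr}\zeta_1^2}\bigr)\,dg_{ds}$, which makes the $1/P^2$ scaling manifest and cleanly removes any stray $\lambda_{ds}^2$ contribution. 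Taking $e^{-\lambda_{rs}\eta_1^2}\to1$, $e^{-\lambda_{ds}g_{ds}^2}\to1$, and $1-e^{-\tilde{\lambda}_{dr}\zeta_1^2}\approx\tilde{\lambda}_{dr}\zeta_1^2$ over the shrinking interval $[0,\beta_1]$, and using the reduced form $\zeta_1=-g_{ds}\sqrt{a}+\sqrt{g_{ds}^2(a-1)+\beta_1^2}$, I would rescale $g_{ds}=\beta_1 t$ to bring everything to $2\lambda_{ds}\tilde{\lambda}_{dr}\beta_1^4\int_0^1 t\bigl(-t\sqrt{a}+\sqrt{t^2(a-1)+1}\bigr)^2\,dt$.

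The main obstacle is evaluating this last integral in closed form. Expanding the square yields a polynomial part that integrates trivially, plus a cross term $-2\sqrt{a}\int_0^1 t^2\sqrt{(a-1)t^2+1}\,dt$; the latter I would handle with the hyperbolic substitution $\sqrt{a-1}\,t=\sinh u$, which converts it into $\int\sinh^2 u\cosh^2 u\,du$ and produces the $\sinh^{-1}(\sqrt{a-1})$ term together with the $\sqrt{a/(a-1)}$ factor appearing in \eqref{asyb}. Collecting the three leading contributions then shows that each of $\mathbb{P}_{dt}$, $\mathbb{P}_{relay}$, and $\mathbb{P}_{dest}$ is proportional to $(2^R-1)^2/P^2$, so the total outage decays like $1/P^2$ and the diversity order is exactly $2$.
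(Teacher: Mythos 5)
Your proposal is correct and follows essentially the same route as the paper's proof: a second-order Taylor expansion of the exponentials in \eqref{outevl}, with the weights $c_1,c_2$ cancelling the first-order terms for $\mathcal{P}_{dt}$ and $\mathcal{P}_{relay}$, and the same expansion plus a closed-form evaluation of the remaining integral for $\mathcal{P}_{dest}$ (the paper simply cites a table of integrals where you use the substitution $\sqrt{a-1}\,t=\sinh u$). Carrying your computation to the end actually yields $\tfrac12\lambda_{ds}\lambda_{rs}\eta_1^4=\lambda_{ds}\lambda_{rs}(2^R-1)^2/(2b^2P^2)$ and an additional factor $1/(a-1)$ multiplying the bracket in $\mathbb{P}_{dest}$, which points to minor typos in the constants of \eqref{asyb} rather than to any flaw in your argument; the $(2^R-1)^2/P^2$ scaling, and hence the diversity order of $2$, are unaffected.
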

\begin{proof}
$\mathbb{P}_{dt}$ and $\mathbb{P}_{relay}$ in (\ref{asyb}) are obtained directly using the second order Taylor series expansion for the exponential functions in (\ref{outevl}). $\mathbb{P}_{dest}$ is obtained by taking the second order Taylor series expansion for the exponential functions in (\ref{outevl}) and evaluating the integral as shown in \cite[pp. 64,68]{tabofint}.
\end{proof}

\section{Relay Power Savings}\label{sec:fadingSection1}
In this section, we analyze the relay power savings in Rayleigh fading channels. In a fading environment, the power saving link-state regime, $\mathcal{R}1$, can occur with some probability for any node distance configuration. With the closed form probability of each link-state regime as derived in Section \ref{sec:prRegime}, we analyze power savings at the relay given different CSI assumptions. Specifically, we consider perfect CSI and practical CSI scenarios and analytically derive the average relay power savings in each case.

\subsection{Required Relay Power}
As shown in Section \ref{sec:analysis}, in link-state regime $\mathcal{R}1$, the relay need not use full power in order for the source to achieve the maximum rate. Specifically the relay power savings in $\mathcal{R}1$ with perfect CSI and practical CSI respectively are
\begin{align}\label{requiredRpowerR1}
S_{\text{perf}}(\mathcal{R}1) = P_r - \beta_{r,\text{perf}},  \quad S_{\text{prac}}(\mathcal{R}1) = P_r - \beta_{r,\text{prac}},
\end{align} 
\noindent with $\beta_{r,\text{perf}}$ and $\beta_{r,\text{prac}}$ as defined in \eqref{optPwrAllocationRxSNR2}. Even though relay power savings are identified only for link-state regime $\mathcal{R}1$, this regime occurs with non-negligible probability in a fading channel as demonstrated in Lemma \ref{theoremprLS}. This power savings is a novel result and fits with the projected theme of green communications in future wireless networks \cite{fullduplex}. 

For each distance configuration in a Rayleigh fading environment, there is a probability for the link-state to be in regime $\mathcal{R}1$ where the relay need not use full power. Link-state regime $\mathcal{R}1$ is significantly probable when the long-term channel statistics correspond to regime $\mathcal{R}1$ as in Fig. \ref{fig:simulation}. This provides ample opportunity for power savings at the relay. For the locations where the relay does not save much power but is still between the source and destination, especially when the relay is closer to the source, there is a more significant rate gain instead of relay power savings. Hence there exists a trade-off between relay power savings and rate gain, which will be further explored numerically in Section \ref{sec:numresults}.

\subsection{Average Relay Power Savings in a Fading Environment}
Next we compute the expected value of the relay power savings for a given distance configuration under the two different CSI assumptions discussed in Section \ref{sec:csi}. Specifically, the expected value of the relay power savings with perfect CSI and practical CSI are presented in closed form in the following theorem.

\begin{thm}\label{thmExpectedPwrSavings}
The expected relay power savings of composite DF relaying when both the source and relay have perfect CSI of $\gamma_o,\gamma_s,\gamma_d$ are
\begin{align}\label{pwrSavings2}
\mathbb{E}[\mathcal{S}_{\text{perf}}]&=\!\frac{\lambda_{ds}\lambda_{dr}\left[P_s\left[\ln{(\lambda_{dr})}\!-\!\ln{(\frac{P_r}{P_s}\lambda_{rs}\!+\!\lambda_{dr})}\right] \!+\!P_r\frac{\lambda_{rs}}{\lambda_{dr}}\right]}{\lambda_{rs}\left(\lambda_{rs}+\lambda_{ds}\right)}.
\end{align}
With practical CSI (perfect receive and long-term transmit CSI), the expected relay power savings are
\begin{align}\label{pwrSavingsltCSI_2}
\mathbb{E}[\mathcal{S}_{\text{prac}}] =
\left\{
	\begin{array}{ll}
	P_r - \frac{\lambda_{ds}\lambda_{dr}-\lambda_{rs}\lambda_{dr}}{\lambda_{rs}\lambda_{ds}}P_s  & \mbox{if } \bar{\gamma}_o < \bar{\gamma}_s \leq \bar{\gamma}_d \\
		0 & \mbox{else } 
	\end{array}
\right.
\end{align}
\end{thm}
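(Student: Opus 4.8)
The plan is to treat the two CSI models separately, since they differ in how the link-state regime is chosen. In both cases I would start from the savings definitions in \eqref{requiredRpowerR1} and the exponential laws $g_{rs}^2\sim\exp(\lambda_{rs})$, $g_{ds}^2\sim\exp(\lambda_{ds})$, $g_{dr}^2\sim\exp(\lambda_{dr})$ from Section \ref{sec:prRegime}, using $\mathbb{E}[g_{\ast}^2]=1/\lambda_{\ast}$, so that $\bar{\gamma}_s=P_s/\lambda_{rs}$, $\bar{\gamma}_o=P_s/\lambda_{ds}$, and $\bar{\gamma}_d-\bar{\gamma}_o=P_r/\lambda_{dr}$.

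The practical-CSI case should be short. Under practical CSI the regime is fixed deterministically by the long-term statistics, so the savings vanishes unless $\bar{\gamma}_o<\bar{\gamma}_s\le\bar{\gamma}_d$; in that event the relay always operates in independent-coding mode with $\beta_{r,\text{prac}}=\frac{\gamma_s-\bar{\gamma}_o}{\bar{\gamma}_d-\bar{\gamma}_o}P_r$, which is affine in the instantaneous $\gamma_s$. Linearity of expectation then gives $\mathbb{E}[\mathcal{S}_{\text{prac}}]=P_r-\frac{\bar{\gamma}_s-\bar{\gamma}_o}{\bar{\gamma}_d-\bar{\gamma}_o}P_r$, and substituting the exponential means yields the stated piecewise expression after routine algebra.

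The perfect-CSI case is the main work. The savings is nonzero only on $\mathcal{R}1=\{g_{ds}^2<g_{rs}^2\le g_{ds}^2+\tfrac{P_r}{P_s}g_{dr}^2\}$, so I would write $\mathbb{E}[\mathcal{S}_{\text{perf}}]$ as the integral of $P_r-\tfrac{(g_{rs}^2-g_{ds}^2)P_s}{g_{dr}^2}$ against the product exponential density restricted to that region. The convenient ordering is to substitute $u=g_{rs}^2-g_{ds}^2$, so the source-relay exponential splits as $e^{-\lambda_{rs}u}e^{-\lambda_{rs}g_{ds}^2}$; the $g_{ds}^2$ integral then decouples and evaluates to the prefactor $\lambda_{ds}/(\lambda_{rs}+\lambda_{ds})$, while the inner $u$-integral over $(0,\tfrac{P_r}{P_s}g_{dr}^2]$, done by parts, leaves a single integral in $z=g_{dr}^2$.

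The hardest step is that $z$-integral, because the $\tfrac{P_s}{g_{dr}^2}$ term introduces a $1/z$ weight. After checking that the $z$-integrand is $O(z)$ near the origin (so there is no genuine singularity), I would split it into two logarithmically divergent pieces whose difference is the Frullani integral $\int_0^\infty z^{-1}\big(e^{-\lambda_{dr}z}-e^{-(\lambda_{dr}+\mu)z}\big)\,dz=\ln\!\big(\tfrac{\lambda_{dr}+\mu}{\lambda_{dr}}\big)$, with $\mu=\tfrac{P_r}{P_s}\lambda_{rs}$; this produces the logarithm in \eqref{pwrSavings2}. Collecting terms, the two rational $\tfrac{\mu}{\lambda_{dr}+\mu}$ contributions combine to exactly $P_r$, and after multiplying by the prefactor and rewriting the logarithm as $\ln(\lambda_{dr})-\ln(\tfrac{P_r}{P_s}\lambda_{rs}+\lambda_{dr})$, the result is \eqref{pwrSavings2}.
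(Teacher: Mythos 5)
Your proposal is correct and takes essentially the same route as the paper: the same conditional/affine argument for practical CSI and the same triple integral of $P_r-\tfrac{(g_{rs}^2-g_{ds}^2)P_s}{g_{dr}^2}$ over $\mathcal{R}1$ against the product exponential density for perfect CSI. The paper simply states ``evaluating the integral yields the result,'' whereas you supply the actual evaluation (the $u=g_{rs}^2-g_{ds}^2$ substitution, the decoupled $\lambda_{ds}/(\lambda_{rs}+\lambda_{ds})$ factor, and the Frullani integral for the $1/z$ term), all of which checks out against \eqref{pwrSavings2}.
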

\begin{proof}
See Appendix C.
\end{proof}
\noindent With perfect CSI, these relay power savings can be realized whenever the instantaneous link states satisfy the link conditions for $\mathcal{R}1$. Conversely, with practical CSI, the link-state regime depends only on the average link amplitudes because the source and relay must agree in their choice of link-state regime. Therefore, with practical CSI, the relay saves power only when the average link states satisfy the condition for link-state regime $\mathcal{R}1$ (i.e. if $\bar{\gamma}_o < \bar{\gamma}_s \leq \bar{\gamma}_d$). Assuming either perfect or practical CSI, with only knowledge of the distance configuration and path loss exponents, the mean relay power savings in a Rayleigh fading environment can be computed in closed form.

The relationship between the two expected power savings under perfect CSI and practical CSI in Theorem \ref{thmExpectedPwrSavings} is not explicit. The expected power savings with practical CSI can be larger than that of perfect CSI because with practical CSI, there is a trade-off between power savings and rate. Specifically, with practical CSI, the achievable rate is generally smaller than that with perfect CSI. For example, consider a node-distance configuration in which link-state regime $\mathcal{R}1$ is appropriate using practical CSI. In this case, the relay always conserves power as the link is assumed to be in link-state regime $\mathcal{R}1$. However, the rate in the practical CSI case would not be as high as that of perfect CSI because the power allocation is not adapted to each particular fade. Therefore, in this case, rate is sacrificed for relay power savings. This trade-off between rate and power savings using practical CSI will be numerically illustrated in the next section.

\begin{figure*}[t]
   \begin{center}
        \subfigure[]{
		\includegraphics[width=0.45\textwidth]{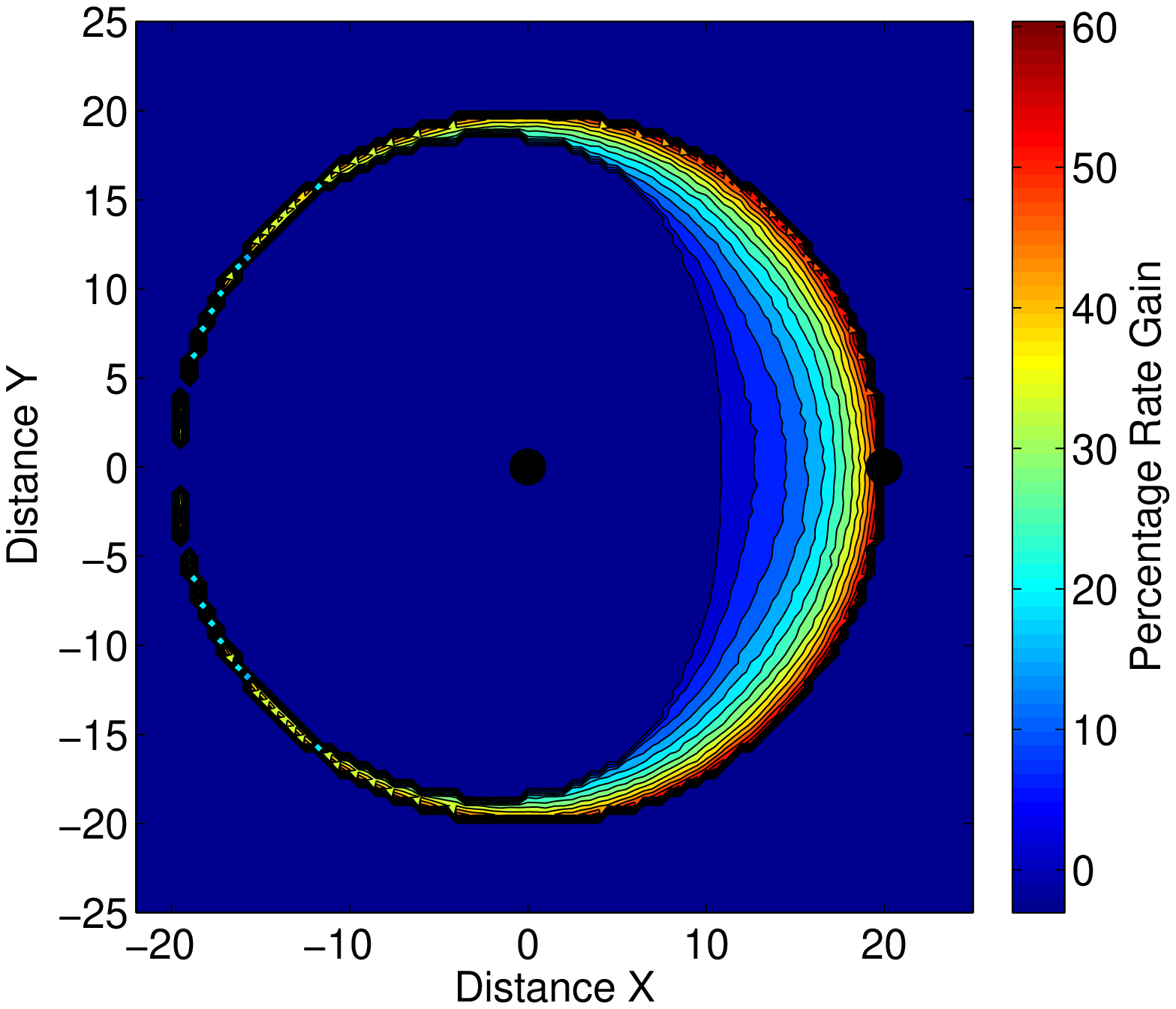}
		\label{fig:rateGain1}
    }
        \subfigure[]{
				\includegraphics[width=0.45\textwidth]{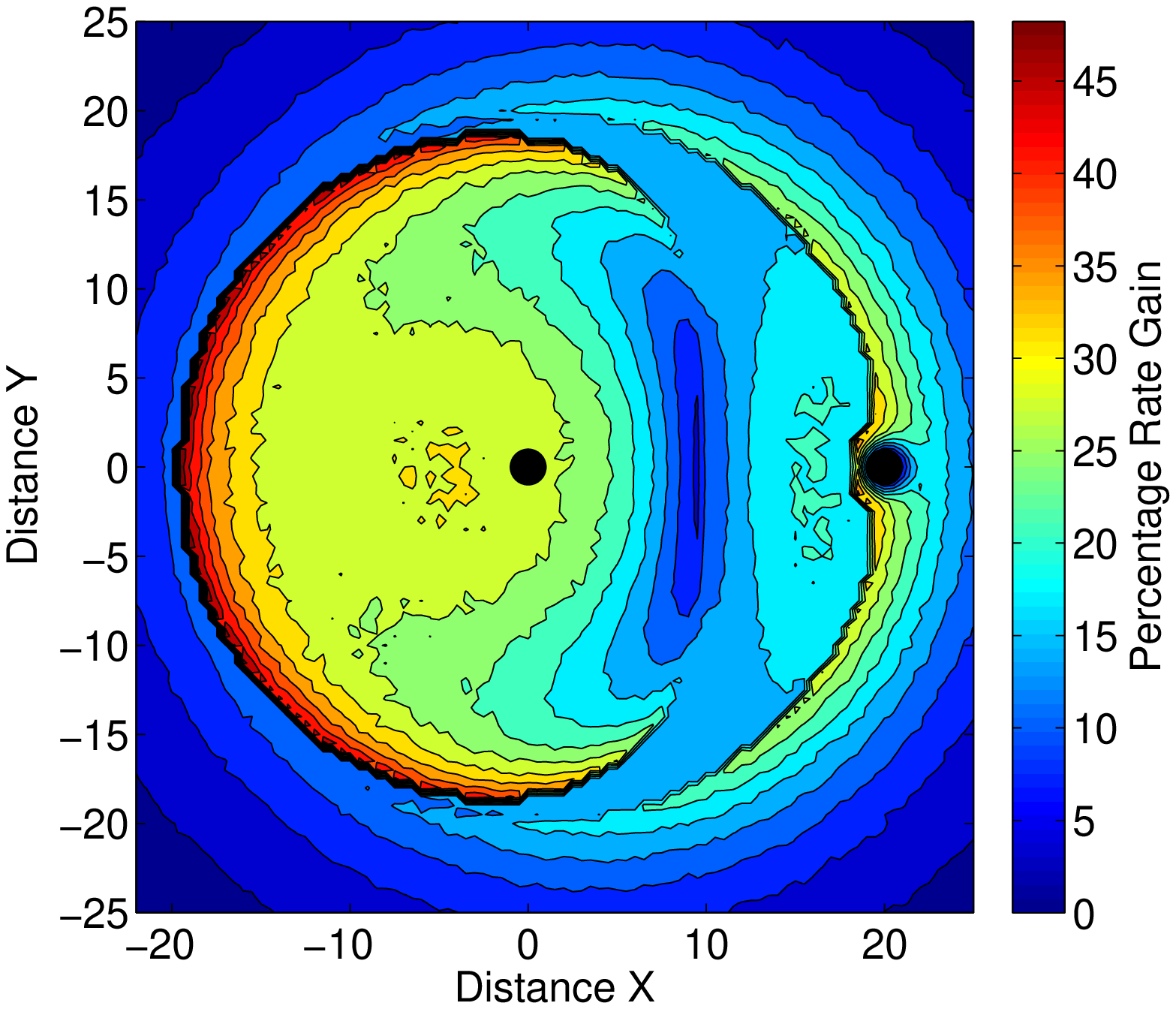}
				\label{fig:rateGain2}
    }
            \subfigure[]{
    \includegraphics[width=0.465\textwidth]{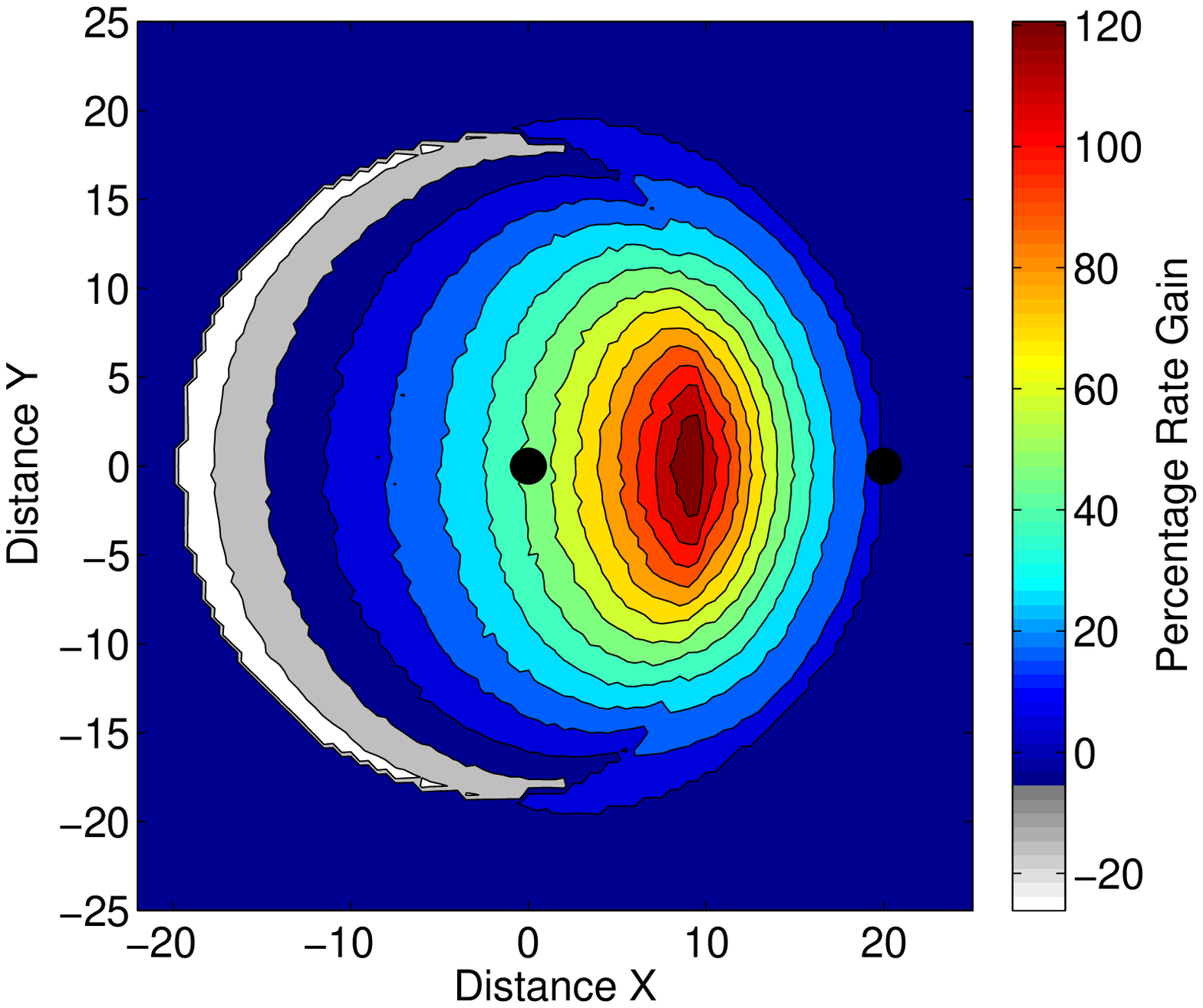}
				\label{fig:rateGain3}
    }
            \subfigure[]{
    \includegraphics[width=0.465\textwidth]{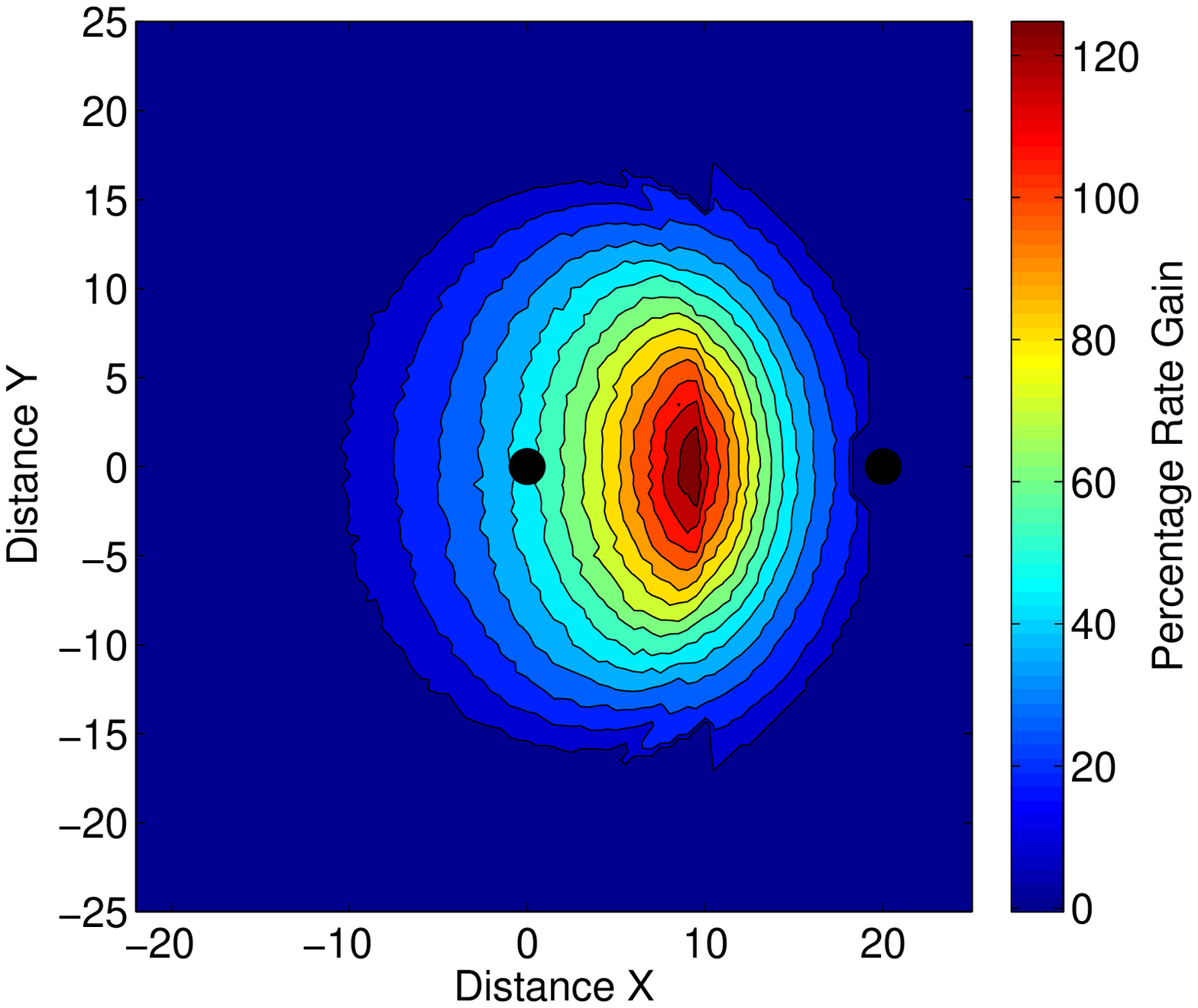}
				\label{fig:rateGain4}
    }
    \caption{Percentage rate gain of composite DF relaying with various models of CSI: a) practical CSI over long-term CSI; b) perfect CSI over practical CSI; c) practical CSI over direct transmission; d) practical CSI over direct transmission with ellipsoidal numerical threshold ($\gamma=3.6$, SNR$=5$dB)}
        \label{fig:rateGain}
    \end{center}
\end{figure*}

\section{Numerical Results}\label{sec:numresults}
In this section, we present numerical results to verify the analysis in Sections \ref{sec:csi}, \ref{sec:OUTA}, and \ref{sec:fadingSection1}. Define the average received $\text{SNR}$ at the destination for the signal from the source as follows:
\begin{align}\label{snreq}
\text{SNR}&=10\log\left(P/(d_{ds}^{\gamma})\right).
\end{align}
\noindent We set both the source and relay power equal to $P$ and pathloss equivalent to 3.6 where applicable. In spatial simulations, the source is fixed at [0,0] and destination at [20,0] (the same simulation setup as for Fig. \ref{fig:simulation}) and the relay location is varied over the whole plane using resolution of $1/2$ meter. At each relay location, 10,000 Rayleigh fading channels are generated using the inter-node distances for that particular relay location.

\subsection{Performance Comparison for Various CSI Models}
First we apply the link-state regimes to perform link adaptation. In Fig. \ref{fig:rateGain}, we compare the rates achievable with the various models of CSI (long-term CSI, practical CSI, and perfect CSI) as discussed in Section \ref{sec:csi}. Power allocation parameters are computed as in \eqref{optPwrAllocationRxSNR2} and the rate under each CSI model is averaged over all channel fading realizations.

Fig. \ref{fig:rateGain} demonstrates the usefulness of practical CSI. Recall from Section \ref{sec:csi} the difference between practical CSI and long-term CSI is that with practical CSI, the relay utilizes a more up-to-date value of $\beta_r$ due to instantaneous receive SNR $\gamma_s$. Compared to long-term CSI in Fig. \ref{fig:rateGain1}, practical CSI offers significant rate gain in the independent coding link-state regime when the relay is closer to the destination. Fig. \ref{fig:rateGain2} depicts the rate gain of perfect CSI over practical CSI. These rate gains are not substantial when the relay is between the source and destination, demonstrating that practical CSI is a viable alternative to perfect CSI in a fading environment.

In Fig. \ref{fig:rateGain3}, it is evident that using practical CSI with the composite scheme substantially outperforms direct transmission when the relay is between the source and destination. In these relay locations, employing the composite scheme with practical CSI results in up to 120\% rate gain over direct transmission, a significant gain. In some relay locations when the relay is farther away from the destination than from the source, however, composite DF is outperformed by direct transmission as discussed in Section \ref{sec:csiModels}; these areas are in grey and white in Fig. \ref{fig:rateGain3}.

The probabilities of the events discussed in  Section \ref{sec:csiModels} that cause direct transmission to outperform composite DF could be computed analytically, which would provide an exact rule for when to use the relay. However, because the affected region and the rate loss are both relatively small, the impact of an exact analysis would be marginal. Instead we apply a simplified numerical rule. Based on the results in Fig. \ref{fig:rateGain3}, the relay should be utilized if it falls within an ellipse with focal points (2,8) and (2,-8), or if the relay-to-destination distance is less than the source-to-relay distance. This geometry can be found numerically for each given source-to-destination distance and pathloss combination and accessed via a lookup table.

\subsection{Outage Performance}
We next provide numerical results for the outage probabilities of the considered composite DF scheme. In these simulations, we set the target rate $R(\text{target})=5\text{bps/Hz}$. We assume all links are Rayleigh fading channels such that the average channel gain for each link is proportional to ${d_{ij}^{-\gamma/2}}$. All simulations are obtained using $10^6$ samples for each fading channel. In these simulations the power allocation parameters are varied to obtain the best outage performance.

Figure \ref{fig:outvsnr} shows the outage probability for the composite DF scheme when the relay uses power based on the link-state as shown in Theorem \ref{thmoptimalpowerallocation}. Results confirm Corollary \ref{thmDA} as the proposed scheme achieves the full diversity order of $2$. These results also demonstrate that using partial relay power based on channel statistics may or may not degrade the outage performance compared to using full power, depending on the node-distance configuration.

In Fig. \ref{fig:ahmadfig7}, we consider a 2D network with fixed source location at (-5,0) and destination location at (5,0) while letting the relay move on the plane. Source and relay power are obtained from \eqref{snreq} such that the SNR is equal to 10dB. Further, we assume the relay only has long-term CSI for all links. This figure illustrates the regions in which the composite scheme has an outage probability below 2\% when the relay uses the minimum required power based on the link state. The results in Fig. \ref{fig:ahmadfig7} demonstrate that for a significant portion of the region in which the outage probability is below 2\%, the relay conserves power using the composite DF scheme with only long-term CSI. This result is significant because it establishes that the composite scheme conserves power at the relay while still satisfying the outage threshold.

\begin{figure}[t]
\begin{center}
\includegraphics[width=0.45\textwidth, height=56mm]{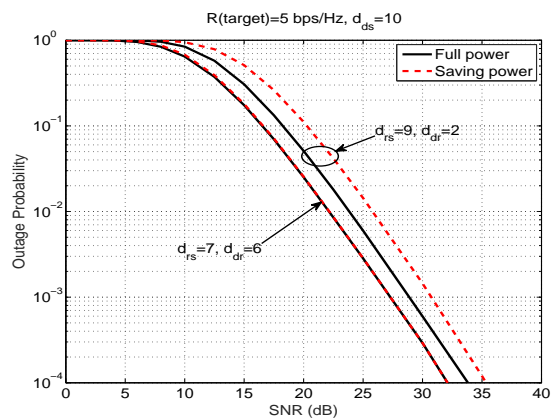}
    \caption{Outage probability for the composite DF scheme versus $\text{SNR}$ with partial or full relay power. Here the relay only has long-term CSI of all links.}\label{fig:outvsnr}
\end{center}
\end{figure}

\begin{figure}[t]
\begin{center}
\includegraphics[width=0.45\textwidth]{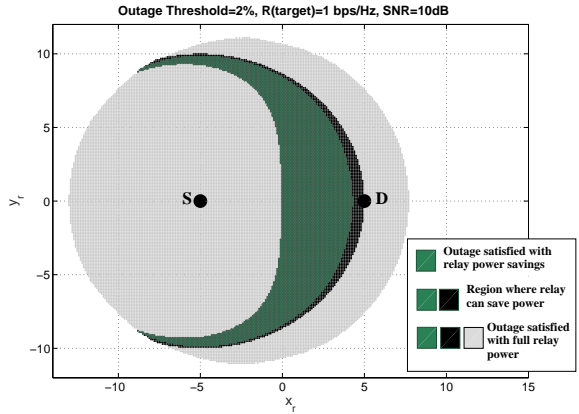}
    \caption{Regions where the proposed scheme achieves an outage probability below 2\% when the relay uses the minimum required power with long-term CSI.}
    \label{fig:ahmadfig7}
\end{center}
\end{figure}

\subsection{Relay Power Savings}\label{sec:numresultsPwr}
By simulating the expected relay power savings under both the perfect CSI and practical CSI assumptions, we verify the analysis in Theorem \ref{thmExpectedPwrSavings} and numerically evaluate the power savings in a plane.

\begin{figure}[t]
    \begin{center}
    \includegraphics[width=0.43\textwidth]{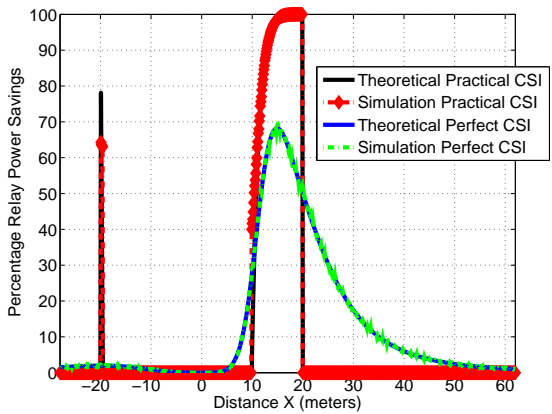}
    \caption{Comparison of relay power savings with practical and perfect CSI ($\gamma=3.6$, SNR$=5$dB). Note that this figure corresponds to the relay moving between the source and destination on line $y=0$ in Fig. \ref{fig:pwrSavings}.}
     \label{fig:compareCSI}
    \end{center}
\end{figure}
\begin{figure}[t]
   \begin{center}
        \subfigure[]{
		\includegraphics[width=0.42\textwidth]{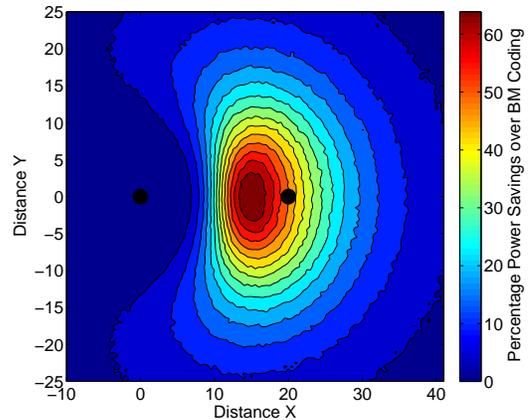}
		\label{fig:pwrperfectCSI}
    }
        \subfigure[]{
				\includegraphics[width=0.43\textwidth]{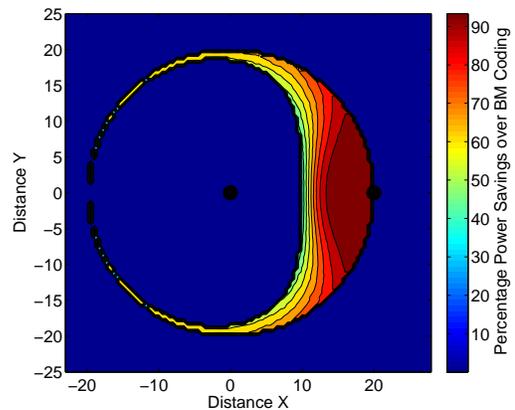}
				\label{fig:pwrpracCSI}
    }
    \caption{Percentage of relay power savings ($\gamma\!=\!3.6$; SNR$=5$dB) a) with perfect CSI b) with practical CSI}
    \label{fig:pwrSavings}
    \end{center}
\end{figure}

\subsubsection{Verification of Analytical Relay Power Savings}
First, we validate the expected relay power savings with perfect CSI and practical CSI in Theorem \ref{thmExpectedPwrSavings}. In Fig. \ref{fig:compareCSI}, the source is fixed at [0,0] and destination at [20,0] while the relay location varies along the X axis. For each fading channel in the perfect CSI case, the appropriate link-state regime is determined based on Theorem \ref{tablethm} and the relay power savings are computed as in (\ref{requiredRpowerR1}) for link-state regime $\mathcal{R}1$ (recall in $\mathcal{R}2$ the relay uses full power). In the practical CSI case, the link-state regime is determined based on the average links but the power consumed at the relay depends on the instantaneous receive SNR, $\gamma_s$, due to receiver channel estimation at the relay. It is evident in Fig. \ref{fig:compareCSI} that the simulation verifies Theorem \ref{thmExpectedPwrSavings} for both CSI assumptions. As the relay approaches either the destination or $-20$m, with practical CSI the relay conserves most of its power because the difference between the instantaneous $g_{rs}$ link amplitude and the average $\overline{g_{ds}}$ link amplitude is small.

Under both the perfect and practical CSI assumptions, the relay conserves power when it is between the source and destination. These savings occur at more relay locations for the perfect CSI case because with perfect CSI, link-state regime $\mathcal{R}1$ occurs with a non-zero probability in many inter-node distance configurations. However with practical CSI, relay power savings only occur when the node distances correspond directly to link-state regime $\mathcal{R}1$ via path loss only. Because of this, the practical CSI relay savings abruptly drops off while the perfect CSI relay savings changes more gradually as the relay location varies.

\subsubsection{Relay Power Savings in a Plane}
Next we simulate the relay power savings in two dimensions with both perfect and practical CSI in Fig. \ref{fig:pwrSavings}. In Fig. \ref{fig:pwrperfectCSI}, we assume the source and relay have perfect CSI and the power savings are computed for each relay location by averaging over all sample channels. In Fig. \ref{fig:pwrpracCSI}, we assume the source and relay only have practical CSI and the average channel proportional to the distance is used to both compute the link-state regime and power allocation parameters (with the exception of $\beta_{r,\text{prac}}$ from \eqref{optPwrAllocationRxSNR2} in link-state regime $\mathcal{R}1$). Thus the power savings are computed as in (\ref{requiredRpowerR1}) at each relay location.

A significant result from Fig. \ref{fig:pwrperfectCSI} in which the source and relay have perfect CSI is the vast proportion of space that the relay can conserve power and still achieve the maximum DF rate. As expected, the relay saves the most power (over $60\%$) in regions in which, without fading, the relay performs only independent coding. This is evident by comparing Fig. \ref{fig:pwrSavings} with Fig. \ref{fig:simulation}. When link-state regime $\mathcal{R}1$ is more probable, the relay is able to conserve more power.

We next compare the power savings at the relay under the two different CSI assumptions in Fig. \ref{fig:pwrSavings}. It is evident that relay power savings are possible in many more locations if perfect CSI is available. Specifically, if perfect CSI is available and the relay is within approximately 10 meters (half the source-to-destination distance) in any direction of the destination, relay power savings of over 25\% are attainable. Notably, even if the source-to-relay distance is larger than the source-to-destination distance, the relay can conserve power because of fading. Conversely, if practical CSI is available, the relay only conserves power if it is within 10 meters of the destination and is between the source and destination. Thus, in the practical CSI case, the relay only conserves power if the source-to-relay distance is less than the source-to-destination distance.

\begin{figure}[t]
   \begin{center}
    \includegraphics[width=0.5\textwidth]{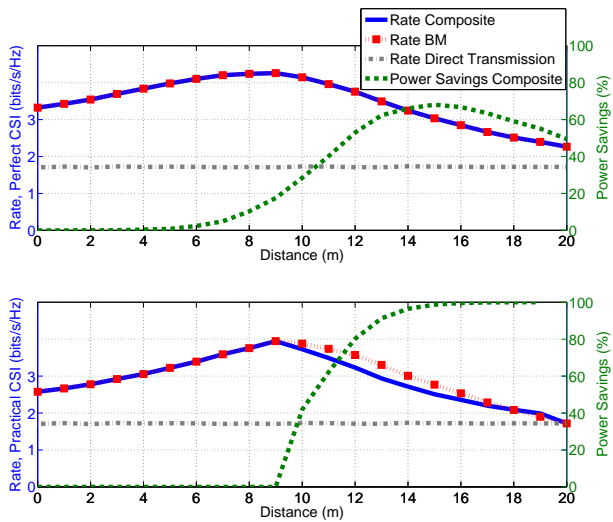} 
    \caption{Rate versus power savings comparison between block Markov DF relaying and composite DF relaying using (i) perfect CSI and (ii) practical CSI ($\gamma=3.6$, SNR$=5$dB)}\label{fig:rateVpower}      
    \end{center}
\end{figure}

\subsection{Power Savings and Rate Gain Trade-off for Relay Placement}\label{sec:tradeoff}
Here we illustrate the trade-off between rate gain and relay power savings for relay placement with perfect CSI or practical CSI. This trade-off is evident by comparing Fig. \ref{fig:rateGain3} with Fig. \ref{fig:pwrpracCSI} and in Fig. \ref{fig:rateVpower}. The most rate gain is obtained when the relay is closer to the source and the most power savings are realized when the relay is closer to the destination. Specifically if the relay is between the source and destination but closer to the source, employing the composite scheme with practical CSI results in up to 120\% rate gain over direct transmission in Fig. \ref{fig:rateGain3}, a significant gain. When the relay is closer to the destination, rate gain is reduced but there are greater relay power savings, as shown in Fig. \ref{fig:rateVpower}. As the relay approaches the destination, the average relay-to-destination link amplitude is very large due to the proximity of relay and destination; thus $\beta_{r,\text{prac}}$ approaches zero and the relay power savings approach 100\% in the limit.

On the other hand, with perfect CSI  shown in Fig. \ref{fig:pwrperfectCSI}, the link-state regime choice changes based on the instantaneous link states and as the relay approaches the destination, the relay is not always utilized because sometimes the source-to-relay link amplitude $g_{rs}$ will be less than that of the direct link, $g_{ds}$. Therefore, the power savings are not quite as large with perfect CSI as with practical CSI but perfect CSI achieves a slightly higher rate than that of practical CSI. However, it only requires forfeiting a small amount of rate gain to realize these substantial power savings when moving from perfect CSI to practical CSI.

In Fig. \ref{fig:rateVpower}, we compare the rate and power savings between the composite scheme and traditional block Markov DF relaying. As expected, with perfect CSI (top figure) the rate of the composite scheme is exactly the same as that of block-Markov DF relaying. With practical CSI (bottom figure), the rate of block Markov DF relaying is only slightly higher than that of the composite scheme in link-state regime $\mathcal{R}1$, and is identical in all other cases. Notably, however, the amount of power savings by the composite scheme is substantial under both CSI cases. A relay employing block Markov coding always uses full power, whereas a relay employing the composite scheme always saves power under  independent coding. Independent coding is also simple to implement practically because there is no requirement of source-relay phase coherency.

\section{Conclusion}\label{sec:conc}
In this paper, we analyze a composite DF scheme for the relay channel consisting of coherent block Markov coding and independent coding. By maximizing the achievable rate, we identify link-state regimes in which a particular transmission technique is optimal. We demonstrate that independent coding is optimal in one link-state regime which achieves the maximum DF rate but also results in power savings at the relay. The application of these link-state regimes for link adaptation in fading is analyzed under different CSI assumptions: perfect, long-term, and practical CSI. The expected relay power savings are computed in closed form for both perfect CSI and practical CSI. With any CSI model there is an implicit relay power savings versus rate trade-off for relay placement. The relay conserves the most power when it is closer to the destination but there is a more significant rate gain when the relay is closer to the source. We further derive the outage probability of the composite scheme with Rayleigh fading in closed form, demonstrating a full diversity order of $2$. These composite DF scheme results help in realizing efficient application of relaying in future cellular and wireless ad hoc networks.

\appendices
\section*{Appendix A: Proof of Lemma \ref{lemmaUsersFullPwr}, Theorem \ref{tablethm}, and Theorem \ref{thmoptimalpowerallocation} }\label{appendixA}
\textit{1. Proof of Lemma 1}
Suppose we fix $\beta_r$ and choose a power allocation such that $\alpha_s+\beta_s\!<\!P_s$. From the constraints in (\ref{optimization}) and rate expressions in (\ref{regionDF}), then we can always increase $\alpha_s$ and $\beta_s$ slightly while keeping $k_s\alpha_s$ the same. This increases both $J_1$ and $J_2$, which increases the rate. Therefore, to obtain the highest rate, $\alpha_s+\beta_s\!=\!P_s$ and the source always uses full power.

Next, let $\alpha_s\!>\!0$ so that the source performs block Markov coding. Then if $k_s\alpha_s+\beta_r\!<\!P_r$, $\alpha_s$ can be decreased slightly and $k_s$ increased such that both $J_1$ and $J_2$ increase, which means the rate also increases. Therefore, the relay uses full power ($k_s\alpha_s+\beta_r\!=\!P_r$) when $\alpha_s\!>\!0$.

\textit{2. Proof of Theorem \ref{tablethm} and Theorem \ref{thmoptimalpowerallocation}}

The complementary slackness conditions for (\ref{LagDT}) can be written as
\begin{align}
   \nabla\mathcal{L} &= \lambda_1^\star(J_1-R^\star) = \lambda_2^\star(J_2-R^\star) =\lambda_3^\star(P_s\!-\!\alpha_s^\star\!-\!\beta_s^\star) \nonumber\\
   &\ =
  \lambda_4^\star(P_r\!-\!k_s\alpha_s^\star\!-\!\beta_r^\star)=\lambda_{5}\alpha_s^*=\lambda_{6}\beta_r^*=0\nonumber
\end{align}
where all the primal and dual variables are non-negative.

\subsubsection{Case 1: $\lambda_1>0$, $\lambda_2=0$}
In this case, the derivative of the Lagrangian with respect to $\alpha_s$ reduces to
\begin{align}\label{simpderivDT}\
  \nabla_{\alpha_s}\mathcal{L} &=\lambda_3  + \lambda_4 k_s-\lambda_{5}=0.
\end{align}
Since the source transmits with full power (Lemma \ref{lemmaUsersFullPwr}), $\lambda_3>0$, then $\lambda_{5}$ must also be strictly positive. By complementary slackness, the minimum must occur when $\alpha_s\!=\!0$, implying that the source performs independent coding in this case.

Furthermore, we note that $J_1$ and $J_2$ are decreasing and increasing functions in $\alpha_s$ respectively, and both are positive. If the minimum occurs when $\alpha_s\!=\!0$, then this implies $\max\limits_{\alpha_s}J_1 \leq \min\limits_{\alpha_s}J_2$. Based on (\ref{regionDF}), this is equivalent to:
\begin{align}\label{channelcondDT1}
g_{rs}^2P_s&\le g_{ds}^2P_s+g_{dr}^2\beta_r.
\end{align}
Since $\beta_r\in[0,P_r]$, this case applies for all links that satisfy:
\begin{align}\label{channelcondDT2}
g_{ds}^2&<g_{rs}^2\le g_{ds}^2+\frac{P_r}{P_s}g_{dr}^2.
\end{align}
Rearranging (\ref{channelcondDT1}) gives a lower bound for $\beta_r$:
\begin{align}\label{beta3DT2}
\beta_r&\geq\frac{g_{rs}^2-g_{ds}^2}{g_{dr}^2}P_s.
\end{align}
$\beta_r$ is the minimum value that satisfies (\ref{beta3DT2}) as using a larger $\beta_r$ does not increase transmission rate. This minimum value of $\beta_r$ also saves relay power.

\subsubsection{Case 3: $\lambda_1>0$, $\lambda_2>0$}
For this case, we have $R=J_1=J_2$, which implies a crossing between $J_1$ and $J_2$. We consider crossings for $\alpha_s>0$, since $\alpha_s=0$ leads to the first case. By Lemma \ref{lemmaUsersFullPwr}, $k_s\alpha_s+\beta_r=P_r$ and $\lambda_4>0$. Moreover, $\alpha_s>0$ implies the source performs block Markov coding, which implies a nonzero $k_s$ and $\nabla_{k_s}\mathcal{L}=0$.

Combining the derivative of the Lagrangian with respect to $k_s$ with the derivative with respect to $\beta_r$ yields
\begin{align}
\lambda_2\frac{g_{ds}g_{dr}k_s^{-0.5}}
  {g_{ds}^2P_s+2g_{ds}g_{dr}\sqrt{k_s}\alpha_s
    +g_{dr}^2k_s\alpha_s+g_{dr}^2 \beta_r+1} - \lambda_{6}=0.\nonumber
\end{align}
Since $ \lambda_2>0$, then $\lambda_{6}>0$ as well. Therefore, by complementary slackness, the minimum must occur when $\beta_r=0$. This implies that the relay performs only block Markov coding, with no independent coding.

$\lambda_1>0$ and $\lambda_2>0$ implies a crossing between $J_1$ and $J_2$ since both are tight. Because $J_1$ is decreasing and $J_2$ is increasing in $\alpha_s$, it then must hold that $\max\limits_{\alpha_s}J_1 > \min\limits_{\alpha_s}J_2$. This condition is equivalent to:
\begin{align}\label{gr2constraintDT}
g_{rs}^2 &> g_{ds}^2+\frac{P_r}{P_s}g_{dr}^2.
\end{align}
In order to find the optimal value of $\alpha_s$, we must solve the equation $J_1=J_2$ (with $\beta_r=0$). Setting $J_1=J_2$ yields:
\begin{align}\label{alphasOpt}
g_{rs}^2\beta_s=g_{ds}^2P_s+2g_{ds}g_{dr}\sqrt{k_s}\alpha_s+g_{dr}^2k_s\alpha_s.
\end{align}
Since $\beta_r\!=\!0$, then $k_s\alpha_s\!=\!P_r$. Using this fact and the source's power constraint, (\ref{alphasOpt}) is a quadratic equation for $\sqrt{\alpha_s}$ and the squared positive root is the optimal value for $\alpha_s$ as in (\ref{optPwrAllocationC}).

\section*{Appendix B: Proof of Lemma \ref{outage formulation} and Theorem \ref{cor_out}}
\label{appendixB}
\subsection{Proof of Lemma \ref{outage formulation}}
In link-state regime $\mathcal{R}0$, direct transmission is optimal and outage can only occur at the destination; this outage is denoted as $\mathcal{P}_{dt}$. In link-state regime $\mathcal{R}1$, outage can only occur at the relay because $J_1 \leq J_2$ in this link-state regime. In link-state regime $\mathcal{R}2$, outage can occur separately at the relay or at the destination. Outage at the destination in $\mathcal{R}2$ is denoted as $\mathcal{P}_{dest}$. If we combine the outage at the relay in $\mathcal{R}1$ and in $\mathcal{R}2$, we obtain $\mathcal{P}_{relay}$.

For the outage formulation in Lemma \ref{outage formulation} to be correct the following identity should be true.
\begin{align}\label{outform1}
1-\bar{\mathcal{P}}_{out}&=\mathsf{Pr}\left[R\leq J_o| \mathcal{R}0\right] \mathsf{Pr}\left[ \mathcal{R}0\right]\nonumber\\
&\ \ \  +\mathsf{Pr}\left[R\leq \min\left(J_1,J_2\right)| \mathcal{R}1 \cup \mathcal{R}2  \right] \mathsf{Pr}\left[\mathcal{R}1 \cup \mathcal{R}2 \right]
\end{align}
where $\bar{\mathcal{P}}_{out}=\mathcal{P}_{dt}+\mathcal{P}_{relay}+\mathcal{P}_{dest}$ as in \eqref{outtc}. Note that all three link-state regimes are disjoint. As such, \eqref{outform1} can be simplified to the identity
\begin{align}\label{identity}
1=\mathsf{Pr}\left[ \mathcal{R}0 \right]+\mathsf{Pr}\left[ \mathcal{R}1 \right]+\mathsf{Pr}\left[ \mathcal{R}2 \right].
\end{align}

\subsection{Proof of Theorem \ref{cor_out}}
First, let $\tilde{g}_{dr}=\sqrt{\frac{P_r}{P_s}}g_{dr}$. Then, the outage probability for each case in \eqref{outtc} can be analyzed as follows.
\subsubsection{Outage Probability of Direct Transmission}
The analysis for the outage probability $\mathcal{P}_{dt}$ in (\ref{outevl}) is straightforward using the exponential distribution of
$g_{ds}^2$ and $g_{rs}^2$ as shown in \cite{myJDF}.
\subsubsection{Outage Probability at the Relay}
From (\ref{outtc}), the outage at the relay is given as follows.
\begin{align}\label{outagerelay1}
\mathcal{P}_{relay}&=\mathsf{Pr}\left[ \left(R> J_1\right) \cap \left(\mathcal{R}1\cup \mathcal{R}2\right)\right],\nonumber\\
&=\mathsf{Pr}\left[ R> \log\left(1+g_{rs}^2\beta_s\right),\;g_{rs}>g_{ds}\right],
\end{align}
where $\beta_s=P_s$ in $\mathcal{R}1$. Formula \eqref{outagerelay1} is similar to $\mathcal{P}_{dt}$ in (\ref{outtc}) except replacing
$\beta_s$ with $P_s$, $g_{rs}$ with $g_{ds}$ and $g_{ds}$ with $g_{rs}$. Hence, $\mathcal{P}_{relay}$ in (\ref{outevl}) is similar to $\mathcal{P}_{dt}$ except replacing $\beta_s$ with $P_s$, $\lambda_{rs}$ with $\lambda_{ds}$ and $\lambda_{ds}$ with $\lambda_{rs}$.

\subsubsection{Outage Probability at the Destination}
From (\ref{outtc}), the outage at the destination is given as follows.
\begin{align}\label{pbs}
\mathcal{P}_{dest}&=\mathsf{Pr}\big[\tilde{g}_{dr}^2P_s\!+\!g_{ds}^2P_s\!+\!2g_{ds}\tilde{g}_{dr}\sqrt{P_s\alpha_s}\!<\!2^R-1,\nonumber\\
&\ \ \ \ \ \ \ \ \  g_{rs}\!>\!\max\{\eta_1,\eta_3\}\big],\\
&=\mathsf{Pr}\big[g_{ds}\!<\!\beta_1,\;\tilde{g}_{dr}\!<\!\zeta_1,\;g_{rs}\!>\!\eta_1\big]\nonumber\\
&\ \ \ \ +\mathsf{Pr}\big[g_{ds}\!<\!\beta_1,\;\eta_2\!<\!\tilde{g}_{dr}\!<\!\zeta_1,\;g_{rs}\!>\!\eta_3\big],\nonumber\\
&=\mathsf{Pr}\big[g_{ds}\!<\!\beta_1,\; \tilde{g}_{dr}\!<\!\zeta_1,\; g_{rs}\!>\!\eta_1\big]\nonumber\\
&=\int_{0}^{\beta_1}\int_{0}^{\zeta_1}
\int_{\eta_1}^{\infty} f_1(g_{ds},\tilde{g}_{dr},g_{rs})dg_{rs}d\tilde{g}_{dr}dg_{ds}\nonumber
\end{align}
where
\begin{align}
&f_1(g_{ds},\tilde{g}_{dr},g_{rs})\!\!=\!\!8\lambda_{ds}\tilde{\lambda}_{dr}\lambda_{rs}g_{ds}\tilde{g}_{dr}g_{rs}
e^{-\left(\lambda_{ds}g_{ds}^2\!+\!
\tilde{\lambda}_{dr}\tilde{g}_{dr}^2\!+\!\lambda_{rs}g_{rs}^2\right)}\nonumber\\
&\eta_1=\sqrt{\frac{2^R\!-\!1}{\beta_s}},\ \eta_2=\sqrt{\frac{2^R\!-\!1}{\beta_s}-g_{ds}^2},\;\; \eta_3=\sqrt{g_{ds}^2\!+\!\frac{P_r}{P_s}g_{dr}^2},\nonumber\\
&\zeta_1=\frac{-g_{ds}\sqrt{\alpha_s}+\sqrt{g_{ds}^2(\alpha_s\!-\!P_s)\!+\!2^R\!-\!1}}{\sqrt{P_s}},\ \beta_1=\sqrt{\frac{2^R\!-\!1}{P_s}}.\nonumber
\end{align}
In (\ref{pbs}), $\mathsf{Pr}\big[g_{ds}<\beta_1,\;\eta_2<\tilde{g}_{dr}<\zeta_1,\;g_{rs}>\eta_3\big]=0$ because $\zeta_1<\eta_1$ for $0<g_{ds}<\beta_1$.
This can be shown as follows. Let $f(g_{12})=\zeta_1-\eta_1$, we have $f(0)<0$ and $f(\beta_1)=0$. Moreover, $f(g_{12})$ is a convex function
in $g_{12}$ since $\partial^2 f(g_{12})/\partial g_{12}^2>0$. Therefore, $\zeta_1<\eta_1$ for $0<g_{ds}<\beta_1$. Then, the analytical evaluations for the triple integral in \eqref{pbs} is straightforward as given in \eqref{outevl}.

\section*{Appendix C: Proof of Lemma \ref{theoremprLS} and Theorem \ref{thmExpectedPwrSavings}}

\subsection{Proof of Lemma \ref{theoremprLS}}
$\textsf{Pr} (\mathcal{R}0)$ follows by integrating over the joint PDF of two exponential random variables. For $\textsf{Pr} (\mathcal{R}1)$, define $W=g_{ds}^2+\frac{P_r}{P_s}g_{dr}^2$ and $X=g_{rs}^2$ for conciseness. From \cite{proofExp}, the PDF of W is
\begin{align}\label{wPDF}
f_{W}(w)=\frac{\lambda_{ds}\tilde{\lambda}_{dr}}{\tilde{\lambda}_{dr} - \lambda_{ds}}(e^{-\lambda_{ds} w}-e^{-\tilde{\lambda}_{dr} w}).
\end{align}
\begin{align}\label{probR2_3}
\text{Then }\textsf{Pr}\left(X<W\right) &=\mathbb{E}_{W}\left[ \textsf{Pr}\left( X<W | W=w \right) \right]\nonumber\\&=\int_0^\infty \left( \int_0^w f_X(x)dx\right) f_W(w) dw
\end{align}
Evaluating (\ref{probR2_3}) yields the probability of link-state regime $\mathcal{R}1$ in (\ref{prLinkStates}). The probability of link-state regime $\mathcal{R}2$ is obtained through the identity in \eqref{identity}.

\subsection{Proof of Theorem \ref{thmExpectedPwrSavings}}
This relay power savings is achieved when the relay performs independent coding for the source. The source-to-relay link $g_{rs}^2$ is bounded in these regions according to Theorem \ref{tablethm}. As such, we write the expected value of the relay power savings using perfect CSI as

\begin{align}\label{pwrSavings}
&\mathbb{E}[\mathcal{S}_{perf}]\!\!=\!\!\int_0^\infty \!\!\!\!\! \int_0^\infty \!\! \!\!\!\int_{g_{ds}^2}^{g_{ds}^2\!+\!\frac{P_r}{P_s}g_{dr}^2} \!\! \!\left(\!\!P_r\!-\!\frac{g_{rs}^2\!-\!g_{ds}^2}{g_{dr}^2}P_s\!\!\right)\!f(*)dg_{rs}^2 dg_{ds}^2 dg_{dr}^2\nonumber\\
&\text{where }\nonumber\\
&f(*)=\lambda_{rs}\lambda_{ds}\lambda_{dr}e^{\left[-\lambda_{rs}g_{rs}^2-\lambda_{ds}g_{ds}^2-\lambda_{dr}g_{dr}^2\right]}.
\end{align} 
Note that the expected value of relay power savings with perfect CSI is written in integral form in \eqref{pwrSavings} because link-state regime $\mathcal{R}1$ can occur in any node-distance configuration with some probability as demonstrated by Lemma \ref{theoremprLS}. Evaluating (\ref{pwrSavings}) yields (\ref{pwrSavings2}). 

With practical CSI in $\mathcal{R}1$, the relay uses instantaneous receive SNR $\gamma_s$ to compute $\beta_r$ (the relay does not have instantaneous CSI for $\gamma_o$ or $\gamma_d$ and hence uses the average channel statistics to calculate these). As such, the expected relay power savings with practical CSI is

\begin{align}
\mathbb{E}_{\gamma_s}[\mathcal{S}_{prac}] =
\left\{
	\begin{array}{ll}
	P_r\left(1-\frac{\gamma_s-\bar{\gamma}_o}{\bar{\gamma}_d-\bar{\gamma}_o}\right)  & \mbox{if } \bar{\gamma}_o < \bar{\gamma}_s \leq \bar{\gamma}_d \\
		0 & \mbox{else } 
	\end{array}
\right.
\end{align}
Evaluating this with Rayleigh fading yields (\ref{pwrSavingsltCSI_2}).
\section*{Acknowledgment}
This work has been supported in part by the Office of Naval Research (ONR, Grant N00014-14-1-0645) and National Science Foundation Graduate Research Fellowship Program (NSF, Grant No. DGE-1325256). Any opinions, findings, and conclusions or recommendations expressed in this material are those of the authors and do not necessarily reflect the views of the Office of Naval Research or NSF.

\bibliographystyle{IEEEtran}
\bibliography{references}
\end{document}